\documentclass{article}

\usepackage{amsmath,amsfonts,amssymb,amsthm,cleveref,color}
\usepackage{enumerate}
\usepackage{ifthen}

\usepackage[]{color}

\newif\ifskip
\skiptrue

\newtheorem{theorem}{Theorem}
\newtheorem{definition}[theorem]{Definition}

\newtheorem{observation}[theorem]{Observation}

\newtheorem{remark}[theorem]{Remark}

\newtheorem{examples}[theorem]{Examples}

\newtheorem{proposition}[theorem]{Proposition}

\begin{document}
\title{Extensions of Courcelle's Theorem without Logic}
\author{Yuval Filmus and Johann A. Makowsky \\ \small Faculty of Computer Science \\ \small Technion-Israel Institute of Technology, Haifa, Israel}

\maketitle
\begin{abstract}
Courcelle's Theorem states that
on graphs $G$ of tree-width at most $k$ with a given tree-decomposition of size $t(G)$, 
graph properties $\mathcal{P}$ definable in Monadic Second Order Logic
can be checked in linear time in the size of $t(G)$.
Inspired by L. Lov\'asz' work using connection matrices instead of logic,
we give a generalized version of Courcelle's theorem which replaces the definability hypothesis
by a purely combinatorial hypothesis using a generalization of connection matrices.
This paper clarifies the role of logic in such theorems and displays their purely combinatorial assumption.

\end{abstract}

\newif\ifskip
\skiptrue
\newcommand{\cA}{\mathcal{A}}
\newcommand{\cF}{\mathcal{F}}
\newcommand{\cH}{\mathcal{H}}
\newcommand{\cL}{\mathcal{L}}
\newcommand{\cC}{\mathcal{C}}
\newcommand{\cD}{\mathcal{D}}
\newcommand{\cE}{\mathcal{E}}
\newcommand{\cG}{\mathcal{G}}
\newcommand{\cP}{\mathcal{P}}
\newcommand{\cT}{\mathcal{T}}
\newcommand{\cS}{\mathcal{S}}
\newcommand{\fA}{{\mathfrak A}}
\newcommand{\fB}{{\mathfrak B}}
\newcommand{\fC}{{\mathfrak C}}
\newcommand{\B}{{\mathbb B}}
\newcommand{\N}{{\mathbb N}}
\newcommand{\Z}{{\mathbb Z}}
\newcommand{\MSOL}{\mathbf{MSOL}}
\newcommand{\SOL}{\mathbf{SOL}}
\newcommand{\CMSOL}{\mathbf{CMSOL}}
\newcommand{\FOL}{\mathbf{FOL}}
\newcommand{\FPT}{\mathbf{FPT}}
\newcommand{\NP}{\mathbf{NP}}
\newcommand{\Str}{\mathrm{Str}}
\newcommand{\IND}{\mathrm{IND}}
\newcommand{\MW}{\mathrm{MW}}
\newcommand{\CW}{\mathrm{CW}}
\newcommand{\TW}{\mathrm{TW}}
\newcommand{\Subst}{\mathrm{Subst}}
\newcommand{\TWW}{\mathrm{TWW}}
\newcommand{\bF}{\mathbf{F}}
\newcommand{\bA}{\mathbf{A}}

\section{Introduction and outline}
\label{se:intro}

Monadic Second Order Logic ($\MSOL$) is an extension of First Order Logic which can be used to describe
formal languages and graph properties. We assume the reader is somehow familiar the basics of Finite Model theory, 
see \cite{ebbinghaus1995finite,libkin2004elements},
the notion of tree-width and other notions of graph-width, \cite{hlinveny2008width},
and Courcelle's Theorem, \cite{downey2012parameterized,downey2013fundamentals}.

\subsection{Three classical theorems}

\subsubsection*{Courcelle's Theorem}
Courcelle's Theorem  from 1990 is described in its Wikipedia page as follows, \cite{wi-courcelle}:
\begin{quote}
In the study of graph algorithms, Courcelle's theorem is the statement that every graph property definable in 
the monadic second-order logic of graphs can be decided in linear time on graphs of bounded tree-width. 
The result was first proved by Bruno Courcelle in 1990, \cite{courcelle1990monadic} 
and independently rediscovered by Borie, Parker and Tovey (1992), \cite{borie1992automatic}. 
It is considered the archetype of algorithmic meta-theorems.
\end{quote}
In \cite{courcelle2000linear,courcelle2001fixed} an analogue of Courcelle's Theorem was formulated and proved for graph classes of bounded clique-width, and
in \cite{courcelle2002fusion}, see also \cite[Theorem 4.21]{makowsky2004algorithmic}, a very general version was formulated and proved for $\CMSOL$, 
the extension of $\MSOL$ with added modular counting quantifiers. It also holds for certain recursively defined graph classes 
which generalize tree-width and clique-width.

Courcelle's Theorem is not the first meta-theorem involving MSOL.
It has a precursor in formal language theory.

\subsubsection*{The B\"uchi, Elgot, Trakhtenbrot Theorem}

The B\"uchi, Elgot, Trakhtenbrot Theorem from 1960-62, aka the BET-Theorem, is  summarized in its Wikipedia page, \cite{wi-BET}:
\begin{quote}
In formal language theory, the B\"uchi–Elgot–Trakhtenbrot Theorem states that a language is regular if and only if it 
can be defined by a formula in monadic second-order logic (MSOL). 
The theorem is due to Julius Richard B\"uchi, Calvin Elgot, and, independently, to  Boris Trakhtenbrot,
\cite{buchi1960weak,elgot1961decision,trakhtenbrot1962finite,trahtenbrot1966finite}.
\end{quote}

\subsubsection*{The Specker-Blatter Theorem}
It also has a precursor in finite combinatorics from 1981, see \cite{blatter1984recurrence}. 
\begin{quote}
The Specker-Blatter Theorem states that the number $s(\cP, n)$  of labeled graphs over
$n$ vertices satisfying an MSOL-deﬁnable property $\cP$ satisﬁes 
for every modulus $m$ a linear modular recurrence relation, which depends on $\cP$ and $m$. 
\end{quote}
It turns out that each of these theorems has a MSOL-free version, where definability in MSOL is replaced by an algebraic property
of a certain Hankel matrix.

\subsection{Hankel matrices}
Let $\cS$ be a class of finite $\tau$-structures and let $\Box$ be a binary relation on $\cS$.
Let $\cP \subseteq \cS$ be closed under $\tau$-isomorphisms. 

We define an infinite matrix $\cH(\cS,\Box, \cP)$, the Hankel matrix associated with $\cS,\Box$ and $\cP$ as follows:
\begin{enumerate}[(i)]
\item
Let $S_i$ be an enumeration of $\cS$ (up to $\tau$-isomorphism).
The rows and columns of $\cH(\cS,\Box, \cP)$ are labeled  by $S_i$.
\item
The entries of
$\cH(\cS,\Box, \cP)$ are $0$ or $1$ with $\cH(\cS,\Box, \cP)_{i,j} = 1$ iff $S_i \Box S_j \in \cP$.
\item
$\cH(\cS,\Box, \cP)$ can be viewed as a matrix over the finite field $GF(2)$ (equivalently the two element boolean algebra).
\item
We denote by $\rho(\cS,\Box, \cP)$  the rank of $\cH(\cS,\Box, \cP)$ in $GF(2)$.
\item
We denote by $\epsilon(\cS,\Box, \cP)$  the number of distinct rows of $\cH(\cS,\Box, \cP)$. This corresponds to the number of equivalence classes of 
of $\sim_{\Box}^{\cP}$ as defined in Section \ref{se:myproof} and used in Proposition \ref{pr:hankel}.
We call $\epsilon(\cS,\Box, \cP)$ the {\em equivalence number of $\cH(\cS,\Box, \cP)$}.
\end{enumerate}

$\rho(\cS,\Box, \cP)$ and $\epsilon(\cS,\Box, \cP)$ are closely related.
In the Courcelle-Lov\'asz Theorem the rank is used, in the Specker-Blatter Theorem the equivalence number is used.
\begin{proposition}
\label{obs:1}
Let $\cH(\cS,\Box, \cP)$ be given.
\begin{enumerate}[(i)]
\item
$\rho(\cS,\Box, \cP)$ is finite iff $\epsilon(\cS,\Box, \cP)$ is finite.
\item
$\rho(\cS,\Box, \cP) \leq \epsilon(\cS,\Box, \cP) \leq 2^{\rho(\cS,\Box, \cP)}$.
\end{enumerate}
\end{proposition}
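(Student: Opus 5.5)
Part (i) follows from part (ii), so the plan is to prove (ii) and then read off (i). Throughout, write $H=\cH(\cS,\Box,\cP)$, $\rho=\rho(\cS,\Box,\cP)$ and $\epsilon=\epsilon(\cS,\Box,\cP)$. $H$ is an infinite $0/1$ matrix over $GF(2)$. We take $\rho$ to be the supremum of the ranks of its finite submatrices, which is the same as the dimension of the span of its rows in $GF(2)^{\N}$.

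For $\rho\le\epsilon$: the row space of $H$ is spanned by the set of distinct rows, and this set has $\epsilon$ elements. So its dimension is at most $\epsilon$. This holds trivially if $\epsilon=\infty$.

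For $\epsilon\le 2^{\rho}$: assume $\rho<\infty$, since otherwise there is nothing to show. Every row of $H$ lies in the row space $V$, and $V$ is a $GF(2)$-vector space of dimension $\rho$. Hence $V$ has exactly $2^{\rho}$ elements. The distinct rows are distinct elements of $V$, so there are at most $2^{\rho}$ of them.

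Part (i) is now immediate. If $\rho$ is finite then $\epsilon\le 2^\rho$ is finite, and if $\epsilon$ is finite then $\rho\le\epsilon$ is finite.

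The only delicate point is the meaning of rank for an infinite matrix. If $\rho$ is defined as the maximal size of a nonsingular finite submatrix, I would check that this agrees with the dimension of the row span. Both quantities equal the maximal number of linearly independent rows. The reason is that finitely many rows that are linearly independent as infinite vectors stay independent when restricted to some finite set of columns: choose, for each nontrivial combination (there are finitely many over $GF(2)$), a column where it is nonzero. So a nonsingular square submatrix of that size exists. This step is the main thing to verify; the rest is counting.
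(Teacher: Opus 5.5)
Your proof is correct and complete. The paper states this proposition without any proof, so there is no argument of the authors' to compare yours with.

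What you give is the standard argument. The $\epsilon$ distinct rows span the row space, which gives $\rho\le\epsilon$. All rows lie in a $\rho$-dimensional $GF(2)$-space, which has exactly $2^{\rho}$ vectors, which gives $\epsilon\le 2^{\rho}$.

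Your last paragraph settles the one point the paper leaves implicit: for an infinite matrix, the finite-minor rank coincides with the dimension of the row span. The direction you did not spell out is immediate. A linear dependency among full rows would restrict to a dependency among their truncations, so rows that are independent on a finite set of columns are independent as infinite vectors.
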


Let us look at four binary operations on graphs $G = (V(G), E(G))$ with $E(G) \subseteq V(G)^2$:
Let $G, H$ be such graphs
\begin{enumerate}[(i)]
\item
The disjoint union of $ G= H_1 \sqcup H_2$ of  $H_1$ and $H_2$;
\item
The join union of $G = H_1 \bowtie H_2$ of  $H_1$ and $H_2$, which is the disjoint union of $V(H_1)$ and $V(H_2)$
with 
$$
E(G) = E(H_1) \cup E(H_2) \cup \{ e=(i,j) \in V(G)^2 : i \in V(H_1), j \in V(H_2) \}.
$$
\item
The tensor product of $G = H_1 \times_t H_2$ of  $H_1$ and $H_2$
with $V(G) = V(H_1) \times V(H_2)$ and 
$$
E(G) = \{ ((u_1, u_2), (v_1, v_2)) \in V(G) : (u_1, v_1) \in V(H_1) , (u_2, v_2) \in V(H_2) \}.
$$
\item
The Cartesian product of $G = H_1 \times_c H_2$ of  $H_1$ and $H_2$;
\begin{gather}
E(G) =  \notag \\
\{ (u_1, u_2, v_1, v_2) \in V(G)^4: 
u_1 = v_1  \wedge (u_2, v_2) \in V(H_2) 
\text{  or  }
u_2 = v_2  \wedge (u_1, v_1) \in V(H_1)\}.
\notag
\end{gather}
\end{enumerate}

We compute $\cH( \Box, \cD)$ for $\Box$ any of the above binary operations.
\begin{examples}
\label{ex:lowrank}
\begin{enumerate}[(i)]
\item
A graph $ G= H_1 \sqcup H_2$ is connected iff $H_1$ or $H_2$ is the empty graph.
Hence $\rho( \sqcup, \cD) =2$.
\item
A graph $ G= H_1 \bowtie H_2$ is connected iff $H_1$ is connected and $H_2$ is the empty graph (or vice versa) or
both are not empty.
Hence $\rho( \sqcup, \cD) =3$.
\item
A graph $ G= H_1 \times_t H_2$ is connected iff  
one is empty or both are connected and one is non-bipartite.
Hence $\rho( \sqcup, \cD) =4$. 
\item
A graph $ G= H_1 \times_c H_2$ is connected iff  
one is empty or both are connected.
Hence $\rho( \sqcup, \cD) =3$. 
\end{enumerate}
\end{examples}
Let $\cD$ be any class of connected graphs containing graphs of every order $n \in \N^+$. 

\begin{observation}
\label{obs:2}
We note that there are continuum many such classes,
but there are  only countably many classes of $\tau$-structures which are $\CMSOL$-definable.
\end{observation}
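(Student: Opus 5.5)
The plan is to prove the two halves of Observation~\ref{obs:2} separately: an uncountable lower bound for the number of classes $\cD$, and a countable upper bound for the number of $\CMSOL$-definable classes.

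\emph{Lower bound.} For each $n \in \N^+$ I fix two non-isomorphic connected graphs on $n$ vertices, say $A_n$ and $B_n$. For $n \geq 4$ the path $P_n$ and the star $K_{1,n-1}$ will do. For $n \leq 3$ there may be only one connected graph, so there I put that graph into every class. To each infinite $0/1$ sequence $x = (x_n)_{n \geq 4}$ I associate the class
\[
\cD_x = \{\text{graphs isomorphic to } A_n : x_n = 0\} \cup \{\text{graphs isomorphic to } B_n : x_n = 1\} \cup \cD_{\le 3},
\]
where $\cD_{\le 3}$ consists of the connected graphs on at most three vertices. Each $\cD_x$ is closed under isomorphism, consists of connected graphs, and contains a graph of every order $n \in \N^+$. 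If $x \neq y$, say $x_n \neq y_n$, then exactly one of $\cD_x, \cD_y$ contains $P_n$, because $P_n \not\cong K_{1,n-1}$ for $n \geq 4$. So $x \mapsto \cD_x$ is injective from $\{0,1\}^{\N}$, and there are at least continuum many such classes. There are also at most continuum many, since every such class is a subset of the countable set of isomorphism types of finite graphs.

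\emph{Upper bound.} A $\CMSOL$ formula over the finite vocabulary $\tau$ is a finite string over a countable alphabet. That alphabet consists of the symbols of $\tau$, logical connectives, first- and second-order quantifiers, the modular counting quantifiers $C_{m}$ for $m \in \N$, and countably many variables. The set of finite strings over a countable alphabet is countable, so there are countably many sentences. Each definable class is the class of models of some sentence, so the map from sentences onto definable classes is a surjection from a countable set, and there are only countably many definable classes.

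The only step needing any care is the choice of two non-isomorphic connected graphs of each order. I handle it by the explicit path/star choice for $n \ge 4$ and by freezing the small orders, so that no real obstacle remains. Combining the two halves, all but countably many of the classes $\cD$ are not $\CMSOL$-definable, which is what Observation~\ref{obs:2} is meant to convey.
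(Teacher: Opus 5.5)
Your proof is correct and is the standard cardinality argument that the paper leaves implicit (it states Observation~\ref{obs:2} without proof): choosing $P_n$ or $K_{1,n-1}$ at each order $n \ge 4$ encodes $0/1$ sequences injectively, and $\CMSOL$ sentences are finite strings over a countable alphabet. One small inaccuracy is harmless: for $n=3$ there are two connected graphs, $P_3$ and $K_3$, but since every class contains all connected graphs on at most three vertices, this changes nothing.
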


\begin{theorem}
\label{th:lowrank}
Let $\Box \in \{\sqcup, \bowtie, \times_t, \times_c\}$.
There are 
continuum many graph properties $\cP$ of graphs with $\rho(\cG \Box, \cP) \leq 3$.
\end{theorem}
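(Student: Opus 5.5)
Take $\cP = \cD$, where $\cD$ ranges over the classes of connected graphs containing graphs of every order $n\in\N^+$ introduced just before Observation \ref{obs:2}; by that observation there are continuum many of them. It then suffices to show that for each fixed $\Box \in \{\sqcup,\bowtie,\times_t,\times_c\}$ the Hankel matrix $\cH(\cG,\Box,\cD)$ has rank at most $3$, with a bound independent of the choice of $\cD$. Since the number of distinct rows is at least the rank, it is enough to exhibit a partition of all graphs into boundedly many classes on which the row is constant, in the spirit of Examples \ref{ex:lowrank}. There the property was connectivity itself, and we must now account for membership in the arbitrary subclass $\cD$.

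\textbf{Step 1: restrict $\cD$.} To keep the rows independent of the particular $\cD$, I would choose $\cD$ inside a rigid family where the operation destroys membership except in trivial cases. For $\sqcup$, $H_1\sqcup H_2$ is connected only if one side is empty, and then it equals the other side. Hence $(H_1,H_2)\mapsto [H_1\sqcup H_2\in\cD]$ equals $[H_1=\emptyset][H_2\in\cD]+[H_2=\emptyset][H_1\in\cD]$ over $GF(2)$, with a correction when both are empty. This is a sum of at most two rank-one matrices plus an entry at the empty graph, so the rank is at most $3$. For $\times_c$ and $\times_t$ I would do the same after restricting $\cD$ to graphs that are not nontrivial products. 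For example, take $\cD$ a subclass of the paths $P_n$ with $n$ prime, or better a family of prime-order connected graphs, chosen by an arbitrary subset $A\subseteq$ primes. Then $H_1\Box H_2\in\cD$ forces one factor to be the one-vertex graph (or the empty graph), which again gives rank at most $3$. For $\bowtie$, the join of two nonempty graphs has complement disconnected. So I would choose $\cD$ among graphs whose complement is connected, such as paths $P_n$ with $n\ge 4$ indexed by an arbitrary infinite set $A\subseteq\N$. Then $H_1\bowtie H_2\in\cD$ forces one side to be empty, and the same rank-$\le 3$ computation applies.

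\textbf{Step 2: count.} Distinct sets $A$ give distinct properties $\cP_A$. There are continuum many such $A$, and every resulting $\cP_A$ has rank $\le 3$ for the given $\Box$.

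\textbf{Main obstacle.} The hardest step is the product cases: checking that the chosen graphs are indecomposable with respect to $\times_t$ and $\times_c$ when a factor is a one-vertex graph with or without a loop, given that edges are ordered pairs and loops are allowed. Prime order settles the Cartesian case immediately, since $|V(H_1)||V(H_2)|$ is prime. For the tensor product, a loopless one-vertex factor yields the edgeless graph, so I would verify that the $GF(2)$ bookkeeping still leaves at most three independent rows: the empty graph, the single looped vertex, and everything else.
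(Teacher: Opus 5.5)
Your argument is correct. For $\sqcup$ it coincides with the paper's proof: the paper also takes $\cP=\cD$ for arbitrary classes $\cD$ of connected graphs containing every order, and bounds the rank by looking at the rows. Your rank-one decomposition is the cleaner formulation. The paper's count of two distinct rows is slightly off: there are three, namely $1_{\cD}$ for the empty graph, the unit vector at the empty graph for members of $\cD$, and zero for all other graphs. The rank is nevertheless $2$.

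The real difference is in the other three operations, which the paper dismisses as ``similar'' with the same arbitrary $\cD$. That claim fails. A join of two nonempty graphs is always connected. So let $\cD$ consist of all paths together with the complete graphs $K_n$ for $n$ in a set $S$ that is not eventually periodic. Then the submatrix indexed by complete graphs is essentially the Hankel matrix $([a+b\in S])_{a,b}$, which has infinite rank over $GF(2)$. Similarly, adding the square grids $P_a\times_c P_a$ to the paths produces an identity submatrix for $\times_c$.

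Your move is to shrink $\cD$ to graphs that are indecomposable for the operation at hand: prime-order paths for $\times_c$ and $\times_t$, and paths with connected complement for $\bowtie$. Then $H_1\Box H_2\in\cD$ forces one side to be a neutral element. This is exactly the ingredient the paper is missing, and it gives rank at most $2$ in every case. The cost is dropping the ``every order'' condition on $\cD$. The statement does not need that condition, but your opening sentence promises to use the paper's $\cD$ and should be revised.

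One small slip: in the tensor case the three row types are not the ones you list. They are $1_{\cD}$ (the looped vertex), the unit vector at the looped vertex (members of $\cD$), and zero (everything else, including the empty graph). The rank bound is unaffected.
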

\begin{proof}
We only prove it for $\Box = \sqcup$, the other cases being similar.

Let $\cD, \cD'$ be two distinct classes of connected graphs containing graphs of every order $n \in \N^+$. 

\begin{enumerate}[(i)]
\item
By observation \ref{obs:2} there are continuum many such classes.
\item
$\cH(\cG, \Box,\cD)$  and  $\cH(\cG, \Box, \cD')$
both have the same number of distinct rows.
One labeled with the empty graph and one labeled with any non-empty graph. 
hence
$$
rho(\cG, \Box,\cD) = \rho(\cG, \Box, \cD') \leq 2.
$$
\end{enumerate}
\end{proof}
More examples with other binary operations on graphs can be found in \cite[Section 13]{fischer2011application}.


\subsection{Replacing $\CMSOL$ by finite Hankel rank}
\subsubsection*{The B\"uchi-Elgot-Trakhtenbrot Theorem}

For the B\"uchi-Elgot-Trakhtenbrot Theorem we let $\cS = \Sigma^*$ be the words over a finite alphabet $\Sigma$, and $\Box = \circ$ be the concatenation of
words. The Hankel matrix $\cH(\Sigma^*,\circ, \cL)$ is sometimes also called the {\em concatenation matrix}.

Now the BET-Theorem can be stated as follows:
\begin{theorem}
Let $\cL \subseteq \Sigma^*$ be a language. Then $\cL$ is regular iff $\epsilon(\Sigma^*, \circ, \cL)$ is finite.
\end{theorem}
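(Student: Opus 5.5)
This is the Myhill--Nerode theorem, restated through the Hankel matrix. The plan is to make explicit that the rows of $\cH(\Sigma^*,\circ,\cL)$ are exactly the Myhill--Nerode classes. Then each direction becomes a translation between finite automata and finitely many distinct rows.

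The row labeled $u$ is the characteristic function of $u^{-1}\cL = \{w : uw \in \cL\}$. So two words $u, u'$ have equal rows iff $u \sim u'$, where
$$u \sim u' \iff \forall w \in \Sigma^* \, (uw \in \cL \leftrightarrow u'w \in \cL).$$
Hence $\epsilon(\Sigma^*,\circ,\cL)$ equals the index of $\sim$. This is the relation $\sim_{\circ}^{\cL}$ of Section \ref{se:myproof}. We also note that $\sim$ is right-invariant: if $u\sim u'$ then $ua \sim u'a$ for every $a\in\Sigma$, since $(ua)w = u(aw)$.

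For the direction ``regular implies $\epsilon$ finite'', fix a deterministic finite automaton $A$ with state set $Q$ accepting $\cL$, and let $\delta^*(q_0,u)$ be the state reached after reading $u$. If $\delta^*(q_0,u)=\delta^*(q_0,u')$, then for every $w$ the runs on $uw$ and $u'w$ end in the same state. Thus $u\sim u'$, the rows labeled $u$ and $u'$ coincide, and $\epsilon(\Sigma^*,\circ,\cL)\le |Q|$.

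For the converse, assume $\epsilon$ is finite and build an automaton whose states are the distinct rows, i.e.\ the classes $[u]$ of $\sim$. The initial state is $[\varepsilon]$, the transition is $\delta([u],a)=[ua]$, and the accepting states are those $[u]$ with $u\in\cL$. The step needing care is that this is well defined. The transition is independent of the representative by right-invariance. The accepting set is independent of the representative because the entry in column $\varepsilon$ of row $u$ is $1$ iff $u\in\cL$, so equal rows agree on membership. An induction on $|w|$ gives $\delta^*([\varepsilon],w)=[w]$, so the automaton accepts exactly $\cL$. Hence $\cL$ is regular.

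No step is a real obstacle; the only substance is the well-definedness check above. The only modelling point is that $\cS=\Sigma^*$ has no nontrivial isomorphisms, so the enumeration ``up to isomorphism'' in the definition of the Hankel matrix is just an enumeration of words. Finally, by Proposition \ref{obs:1}, the statement may equivalently be phrased with finiteness of the rank $\rho(\Sigma^*,\circ,\cL)$.
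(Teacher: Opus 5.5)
Your proposal is correct and takes the paper's route: the paper only remarks that the statement follows readily from the classical Myhill--Nerode Theorem. You make this explicit by identifying the rows of the concatenation matrix with the classes of the right congruence $\sim_{\circ}^{\cL}$, and then you reprove Myhill--Nerode directly, including the well-definedness check, so your argument is self-contained.
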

This was first formulated using Hankel matrices in \cite{carlyle1971realizations} where a similar theorem was proven for stochastic (or weighted) automata.
However, the version above follows readily from the classical Myhill-Nerode Theorem, see \cite{hopcroft1969formal}.

\subsubsection*{The Specker-Blatter Theorem}
If a sequence of integers satisfies for every $m$ a linear recurrence relation modulo $m$ which depends on $m$,
we call it MC-finite (modularly C-finite), in analogy to C-finite sequences over $\Z$.

For the Specker-Blatter Theorem we let $\tau$ be a finite vocabulary consisting of relation symbols of arity at most $2$.
We let $\cS = Str(\tau)$ be  the class of all $\tau$-structures. Let $(\fA,a')$ be a $\tau$-structure with a distinguished element $a'$.
We define $\fC= Subst((\fA,a'), \fB)$ the structure where the element $a'$ is substituted by $\fB$.
More precisely, the universe $C$ of $\fC$ is the disjoint union of the universes $A, B$ of $\fA, \fB$ with $a$ removed.
The binary relations $R \in \tau$ restricted to $A-\{a\}$ and $B$ remain the same. For $(a,b), a \in A, b\in B$ we define
$(a,b) \in R$ iff $(a,a') \in R$. In other words, $\fB$ is a module (in the sense of graph theory) of $\fC$. 
Let $\cP$ a class of $\tau$-structures.
The Hankel matrix $\cH(Str(\tau), \Subst, \cP)$ is sometimes also called
the {\em substitution matrix}.

\begin{proposition}
Assume  $\cP$ is MSOL-definable. Then $\epsilon(Str(\tau),\Subst, \cP)$ is finite.
\end{proposition}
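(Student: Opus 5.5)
The approach is the standard Feferman--Vaught / Ehrenfeucht--Fra\"\i ss\'e composition argument for $\MSOL$. Let $\phi$ be an $\MSOL$ sentence defining $\cP$, and let $q$ be its quantifier rank. Since $\tau$ is finite, there are only finitely many $\MSOL$ sentences of quantifier rank at most $q$ up to logical equivalence. Hence there are only finitely many $\MSOL_q$-types, that is, classes of structures agreeing on all such sentences. The same holds for structures with one distinguished constant, i.e.\ over $\tau\cup\{c\}$.

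The key step is a composition lemma for $\Subst$. If $\fB\equiv^{\MSOL}_q\fB'$, then for every pointed structure $(\fA,a')$ we have
$$
\Subst((\fA,a'),\fB)\equiv^{\MSOL}_q\Subst((\fA,a'),\fB').
$$
Symmetrically, if $(\fA,a')\equiv_q(\fA',a'')$ as pointed structures, then the same holds with the first argument varied. I would prove this with the $q$-round monadic Ehrenfeucht--Fra\"\i ss\'e game, which characterises $\equiv^{\MSOL}_q$.

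Duplicator combines strategies on the two components. On the $\fB$-side she uses her winning strategy for $\fB$ versus $\fB'$. On the $\fA$-side she uses the identity strategy on $A-\{a'\}$. A set move $X\subseteq C$ is split as $X\cap(A-\{a'\})$ and $X\cap B$, answered componentwise, and the answers are reassembled. Point moves are handled likewise.

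At the end one must check that the resulting map is a partial isomorphism respecting the chosen sets. Inside each component this comes from the component strategies. For a cross pair $(a,b)$ with $a\in A-\{a'\}$ and $b\in B$, the definition of $\Subst$ gives $(a,b)\in R$ iff $(a,a')\in R$, and $R$ is identical on both sides. Relations of arity at most $2$ guarantee that no other mixed tuples arise, which is exactly where the arity hypothesis on $\tau$ is used.

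Given the lemma, the value of $\phi$ on $\Subst((\fA,a'),\fB)$ depends only on the $q$-type of $(\fA,a')$ and the $q$-type of $\fB$. Consequently two rows of $\cH(Str(\tau),\Subst,\cP)$ indexed by $\fB$ and $\fB'$ of the same $q$-type are identical. The rows are indexed by the substituted structure; if they are instead indexed by the pointed structures, the same argument applies with the roles swapped. Therefore $\epsilon(Str(\tau),\Subst,\cP)$ is at most the number of $\MSOL_q$-types over $\tau$, which is finite. By Proposition~\ref{obs:1} the rank $\rho$ is finite as well.

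The main obstacle is the composition lemma. In particular, the set moves need care. A set chosen in $\fC$ may contain elements of both parts, and the splitting and reassembly must be compatible with the game. There is also the technical point that the pointed element $a'$ is removed, so it never appears in plays on $\fC$. One way to handle this is to give the $\fA$-side game the extra constant $a'$, so that the atomic type relating an element to $a'$ is preserved. Alternatively, one can avoid games and translate $\phi$ directly into a Boolean combination of sentences about $(\fA,a')$ and $\fB$, by relativising quantifiers and splitting set variables into two parts.
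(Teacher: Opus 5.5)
Your proof is correct. It is the standard argument that substitution is $\MSOL$-smooth (Specker's lemma from the Specker--Blatter literature the paper cites). It uses the $q$-round monadic Ehrenfeucht--Fra\"\i ss\'e game, played componentwise, with the constant $a'$ kept on the $\fA$-side so that Duplicator's answers avoid $a''$ and cross pairs are preserved. The paper gives no proof of this proposition beyond its remark that, for $\cP$ defined by a sentence of quantifier rank $q$, agreement on quantifier-rank-$q$ sentences refines the Hankel row equivalence, and that is exactly the route you take.
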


It was already realized in \cite{blatter1984recurrence}  that definability  in MSOL of $\cP$ was sufficient, but not necessary.
\begin{theorem}
Assume $\rho(Str(\tau),\Subst, \cP) =r$ is finite. 
Then $s(\cP, n)$ is MC-finite and
the linear recurrence relation modulo $m$ depends on $r$ and $m$.
\end{theorem}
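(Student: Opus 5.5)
By Proposition \ref{obs:1}, finiteness of $\rho(Str(\tau),\Subst,\cP)=r$ gives $\epsilon(Str(\tau),\Subst,\cP)\le 2^r$. So the substitution equivalence $\sim_{\Subst}^{\cP}$ has at most $2^r$ classes. Here $\fB \sim \fB'$ iff for every pointed structure $(\fA,a')$ we have $\Subst((\fA,a'),\fB)\in\cP \Leftrightarrow \Subst((\fA,a'),\fB')\in\cP$. I would follow the Specker--Blatter strategy, working only with this finite-index congruence and never with a formula. Fix a modulus $m$. By the Chinese remainder theorem it suffices to treat prime powers $m=p^e$, so I would reduce to that case.

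\emph{Step 1 (counting via a group action).} Consider labeled $\tau$-structures on $[n]$ in $\cP$. Decompose $[n]$ as follows: fix a vertex $v$ and group structures by the set $U\subseteq[n]\setminus\{v\}$ of vertices that are ``twins'' of $v$. Alternatively, let $S_{p}$-type groups act on the labels, and note that orbits of size divisible by $p^e$ contribute $0 \bmod p^e$. The surviving structures are those with large automorphism groups, namely structures containing large modules of mutually indistinguishable vertices. With arity at most $2$, such a module is a clique or an independent set with uniform connections to the outside, i.e.\ a substituted structure $\Subst((\fA,a'),\fB)$ with $\fB$ homogeneous.

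\emph{Step 2 (transfer matrix).} Write $s(\cP,n) \bmod m$ as a sum over ways of building structures by iterated substitution of homogeneous blocks. Membership in $\cP$ depends only on the $\sim$-class of the substituted block. Moreover, substitution into a fixed context induces a map on the at most $2^r$ classes, and $\sim$ is a congruence for $\Subst$ in the first argument as well. This yields a finite ``state space'' of size at most $2^r$ times the number of residues, hence a finite linear system over $\Z/m\Z$. I would then show that the vector of counts, refined by $\sim$-class, satisfies $x_{n+c}=M x_n \pmod m$ for a matrix $M$ whose size depends on $r$ and $m$ only. Here $c$ is a period coming from binomial coefficients $\binom{n}{k} \bmod p^e$, via Lucas/Kummer periodicity. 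Cayley--Hamilton for $M$ over $\Z/m\Z$ then gives a linear recurrence whose order is bounded in terms of $r$ and $m$.

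\emph{Main obstacle.} The hard step is Step 1 combined with Step 2: making precise that modulo $p^e$ only structures with a large homogeneous module matter, and that removing or shrinking such a module changes the $\sim$-class in a way governed by a finite automaton on classes. In the original proof this is where MSOL (via Ehrenfeucht--Fra\"iss\'e games) guarantees that iteratively substituting the one-point structure gives an eventually periodic sequence of classes. Here I would replace that with a pigeonhole argument. The sequence of classes of $K_j$ (resp.\ $\overline{K_j}$) substituted into a point takes at most $2^r$ values, and the map from the class of $K_j$ to the class of $K_{j+1}$ is well defined because $\sim$ is a congruence. Hence this sequence is eventually periodic, with preperiod and period at most $2^r$. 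That provides exactly the finiteness the recurrence needs, with bounds depending only on $r$ and $m$.
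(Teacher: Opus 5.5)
\paragraph{Comparison with the paper.} Your first step is correct, and it is essentially all the paper itself supplies. The paper gives no proof of this statement and quotes it from the literature. The implicit argument is that Proposition~\ref{obs:1} turns finite rank into $\epsilon(Str(\tau),\Subst,\cP)\le 2^r$, i.e.\ finite index of the substitution equivalence. That is the hypothesis under which Blatter and Specker obtained MC-finiteness. As the paper remarks, they already noted that $\MSOL$-definability is only a sufficient condition, used to get finite index. So your statement that Ehrenfeucht--Fra\"{\i}ss\'e games drive the periodicity step misdescribes the original proof. After your first paragraph you could simply invoke their theorem.

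\paragraph{The gap in your reproof.} The self-contained reproof you sketch does not work as written, because Step~1 is false. Take $\tau=\{E\}$ and let $\cP$ be the class of tournaments. $\Subst((\fA,a'),\fB)$ is a tournament iff $\fB$ is one and $(\fA,a')$ is a ``tournament context''. So the substitution equivalence has finitely many classes, and $s(\cP,n)=2^{\binom{n}{2}}\equiv\pm1\pmod 3$ for every $n$. Yet no tournament on at least two vertices admits a transposition as an automorphism, so none has a homogeneous module of size $\ge2$. For $n=3$ the count modulo $3$ is carried by the two cyclic triangles. What a $p$-group action actually leaves are structures in which a block of labels is a module carrying an arbitrary group-invariant structure: circulants for $\Z_p$, iterated circulants for a Sylow subgroup. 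More generally, structures are not built by iterated substitution of homogeneous blocks. Your pigeonhole argument for the classes of $K_j$ and $\overline{K_j}$ is correct, but it controls only this homogeneous fragment.

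\paragraph{Step 2 and prime powers.} Step~2 then asserts $x_{n+c}=Mx_n$ without derivation, and that recurrence is the entire content of the theorem. Once a block is fixed, you are counting pointed structures $(\fA,a')$ on $n-p+1$ labels. The relevant invariant is the map $[\fB]\mapsto[\Subst((\fA,a'),\fB)]$, not the $\sim$-class of $\fA$. For example, a tournament with a loop at $a'$ and a tournament with a loop elsewhere are $\sim$-equivalent (both are non-tournaments). Substituting a tournament at $a'$ gives a tournament in the first case and not in the second. So there is no congruence in the first argument. Repeating the step adds further marked points, and you give no mechanism that closes this recursion into a finite linear system over $\Z/m\Z$. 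Finally, for $e\ge2$, orbits of size $p,\dots,p^{e-1}$ do not vanish modulo $p^e$, and your sketch does not handle them. As it stands the proposal proves only $\epsilon\le2^r$. The rest needs either the citation or a genuinely different argument.
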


The reader interested in the Specker-Blatter Theorem and its applications can consult
\cite{fischer2011application,filmus2023mc,fischer2024extensions,filmus2026effective}. However, for the sequel of this paper this is not needed.

\subsubsection*{The Courcelle-Lov\'asz Theorem}

In 2005, L. Lov\'asz, \cite{freedman2005graph,lovasz2006rank,lovasz2007connection}, introduced another interesting class of Hankel matrices, 
which he called {\em connection matrices}.
They play an important role in the theory of graph limits, \cite{lovasz2012large} and Finite Model Theory, in particular in proving
that graph parameters are not definable in $\CMSOL$ and some of it sublogics.
\cite{godlin2008evaluations,kotek2014connection}.

Here the structures $Str(\tau) = \cG_k$ consist of $k$-graphs 
$$
G_k = (V(G), E(G), a_1, \ldots a_k),
$$ 
graphs with $k$ distinctive constants, called {\em ports}.
Given two $k$ graphs $G_k, H_k$ we define a binary operation $G_k \sqcup_k H_k$ by taking the disjoint union of the underlying graphs and then identifying
the corresponding ports.
Let $\cP$ be a property of $k$-graphs. The Hankel matrix $\cH(\cG_k, \sqcup_k, \cP)$ is called the {\em connection matrix for $\cP$}.

It is shown in \cite{godlin2008evaluations} that:
\begin{theorem}[Finite Rank Theorem]
For $\cP$ definable in $\CMSOL$  the rank $\rho(\cG_k, \sqcup_k, \cP)$ of the connection matrix $\cH(\cG_k, \sqcup_k, \cP)$ is
always finite.
\end{theorem}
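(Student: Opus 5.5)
By Proposition \ref{obs:1}, finite rank is equivalent to finite equivalence number. So it suffices to show that $\epsilon(\cG_k, \sqcup_k, \cP)$ is finite, that is, the relation $G_k \sim H_k$ iff for all $K_k$, $G_k \sqcup_k K_k \in \cP \Leftrightarrow H_k \sqcup_k K_k \in \cP$, has finitely many classes. The natural tool is the Feferman--Vaught type composition theorem for $\CMSOL$ (as in \cite{courcelle2002fusion,makowsky2004algorithmic}). I fix a sentence $\phi$ of quantifier rank $q$ defining $\cP$, with modular counting moduli bounded by some $M$. The aim is to show that the $\CMSOL$ $q$-type of $G_k$ (with the ports as constants) determines its row.

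Step 1: present $\sqcup_k$ as a quantifier-free transduction of a disjoint union. Form the structure $G_k \sqcup H_k$ in an expanded vocabulary: an edge relation, the $2k$ port constants, and unary predicates marking the two sides. Then $G_k \sqcup_k H_k$ is the quotient of this structure by the definable equivalence $a_i \equiv a_i'$. Translating $\phi$ back along this quotient (quantify over elements not equal to a primed port, and replace each primed port by its unprimed counterpart) gives a $\CMSOL$ sentence $\phi^\ast$ of the same quantifier rank, up to a constant. It satisfies $G_k \sqcup_k H_k \models \phi$ iff $G_k \sqcup H_k \models \phi^\ast$. Some care is needed with modular counting of sets that contain identified ports, because each identified port is counted twice in the disjoint union. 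I would handle this by splitting on which ports lie in the set; there are finitely many cases, each adjusting the count by a fixed constant.

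Step 2: the composition theorem for disjoint unions. The $\CMSOL$ $q'$-type of $G_k \sqcup H_k$ is determined by the $q'$-types of $G_k$ and of $H_k$. I would prove this by the usual Ehrenfeucht--Fra\"{\i}ss\'e game argument, extended to modular counting quantifiers: play the two components separately, and note that set moves and cardinalities mod $M$ split additively. Since the vocabulary is finite, there are only finitely many $q'$-types. Hence $\mathrm{tp}(G_k) = \mathrm{tp}(G'_k)$ implies $G_k \sim G'_k$, and the number of distinct rows is bounded by the number of types.

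Step 3: conclude with Proposition \ref{obs:1}(ii). This gives $\rho \leq \epsilon < \infty$.

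The main obstacle is Step 2 with counting quantifiers. Specifically, one has to check that the game characterization of $\CMSOL$-equivalence, where the spoiler may choose sets and modular cardinality conditions, composes over disjoint union. If this becomes messy, I would instead cite the composition theorem for $\CMSOL$ from \cite{courcelle2002fusion} directly, and verify only the reduction in Step 1.
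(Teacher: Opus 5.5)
Your proposal is correct and matches the argument behind this result. The paper gives no proof of its own and only cites Godlin, Kotek and Makowsky, where finiteness of the rank is obtained exactly as you do: $\sqcup_k$ is shown to be smooth for $\CMSOL$ (a quantifier-free quotient of the disjoint union combined with a Feferman--Vaught-type composition theorem), so the row of $G_k$ depends only on its $\CMSOL$ $q$-type, and $\rho \leq \epsilon$ is bounded by the finite number of such types. Your modular-counting correction in Step 1 is fine, though it becomes unnecessary if you also relativize set variables to the non-primed elements.
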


In \cite[Theorem 6.48]{lovasz2012large} the following is stated and a proof is sketched.
\begin{theorem}[Courcelle-Lov\'asz Theorem]
\label{th:C-L}
Let $\cC$ be a class of $k$-graphs of tree-width at most $k$ and $\cP$ with $\rho(\cG_k, \sqcup_k, \cP) =r$ for some $r \in \N$.
Then checking whether a graph $G \in \cC$ with tree-decomposition $t(G)$  satisfies property $\cP$ is 
fixed parameter tractable (in $\FPT$) in the parameter $k$ in the size of $t(G)$.
\end{theorem}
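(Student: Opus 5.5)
Finite rank gives, via Proposition \ref{obs:1}, finitely many equivalence classes of the relation $\sim_{\sqcup_k}^{\cP}$ on $k$-graphs, at most $2^r$ of them. Here $G \sim H$ iff $G \sqcup_k F \in \cP \Leftrightarrow H \sqcup_k F \in \cP$ for every $k$-graph $F$, i.e. the rows of $G$ and $H$ in $\cH(\cG_k,\sqcup_k,\cP)$ coincide. The plan is to run a Myhill--Nerode style dynamic program over the tree decomposition. Each bag is processed as a $k'$-graph, with $k'=k+1$ ports given by the bag vertices, and we compute only the equivalence class of the graph built so far. To make this work, I will first convert $t(G)$ in linear time into a nice tree decomposition. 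Its nodes are leaf, introduce-vertex, introduce-edge, forget and join. I will then express each node type as an operation on $k'$-graphs: join is exactly $\sqcup_{k'}$; introduce-edge is $\sqcup_{k'}$ with a small fixed gadget (the ports plus one edge); introduce-vertex and forget are unary port operations (adding an isolated port, or dropping a port). One technical point is that the hypothesis is stated for $\cG_k$, while these steps shift between different numbers of ports. I will work with ports in a fixed set $\{1,\dots,k'\}$, some possibly unused, and treat $\cP$ for the larger port number. The statement presumably intends that the connection matrices for $k'$ have finite rank; I will make this explicit as an assumption, or reduce to it by padding.

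The key step is to show that $\sim$ is a congruence for every operation used. For $\sqcup_{k'}$ this is immediate from associativity and commutativity of gluing. If $G\sim G'$ and $H\sim H'$, then for any $F$ we have $(G\sqcup H)\sqcup F = G\sqcup(H\sqcup F)$, so we may swap $G$ for $G'$ and then, by the same argument, $H$ for $H'$. For forget, note that $\mathrm{forget}_i(G)\sqcup F$ can be rewritten as $G\sqcup F'$, where $F'$ is $F$ with port $i$ made a fresh non-identified vertex. This requires care with the conventions for unlabelled ports, and I expect it to be the main obstacle. The difficulty is formulating the port-forgetting operation so that the rewriting identity is literally true for the definition of $\sqcup_k$ (identify corresponding ports). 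The first thing I would try is to encode a forgotten port by gluing with a structure that isolates it. Failing that, I would define $\sim$ with respect to all $k'$-graphs with partial port labellings and check that the rank hypothesis still yields finitely many classes.

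Given the congruence, the finitely many classes $c_1,\dots,c_N$ with $N\le 2^r$ and the operations induce finite tables $c_a\sqcup c_b = c_{f(a,b)}$, together with unary tables for the other node types. Membership in $\cP$ is also a function of the class, since $G\in\cP$ iff $G\sqcup_{k'} E_{k'}\in\cP$, where $E_{k'}$ is the edgeless graph on the ports. These tables depend only on $\cP$ and $k$. The algorithm evaluates the tree bottom-up in time linear in $|t(G)|$, with constant-time table lookups per node, and accepts iff the class at the root lies in $\cP$. This gives an $\FPT$ running time $f(k)\cdot|t(G)|$. This is nonuniform, because the tables exist but are not computed from $\cP$, which is consistent with the continuum many properties of Theorem \ref{th:lowrank}. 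I will remark on this nonuniformity explicitly.
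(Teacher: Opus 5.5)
Your skeleton is the paper's Main Algorithm: finitely many row classes, finite transition tables, bottom-up evaluation, and membership read off by gluing with the neutral element $E_{k'}$. The instantiation differs, though. The paper derives Theorem~\ref{th:C-L} from Theorem~\ref{th:main} through the presentation of $\TW(k)$ by coloured graphs with disjoint union, recolouring and fusion. There, finite rank and smoothness are simply assumed for those operations, not derived from the connection matrix. Working directly with $\sqcup_{k'}$ is better in this respect, because its congruence property follows from associativity and commutativity, as you say. Your nonuniformity remark also applies to the paper's ``compute the look-up table'' step. What remains are the port steps, and there the proposal has two genuine gaps.

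\textbf{Forget, and what $\cP$ is.} Let $F'$ be $F$ with port $i$ unlabelled and a fresh isolated vertex labelled $i$ added. The identity $\mathrm{forget}_i(G)\sqcup F=G\sqcup F'$ then holds only for the underlying unlabelled graphs: on the left, label $i$ sits on a vertex of $F$; on the right, on a vertex of $G$. Nor can forget be a gluing with a gadget, since gluing never removes a port. So $\cP$ must be decided on the underlying graph, with ports forgotten after gluing. It cannot be a property of $k'$-graphs, as in your membership test.

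This choice is forced, because for properties of labelled $k$-graphs the statement is false. For $k=1$, take the trees whose port has degree $1$ and whose order lies in a fixed undecidable set. Port degrees add under gluing, and a tree whose port has degree $0$ is $K_1$. So every nonzero entry lies in the row or column of $K_1$, giving $\rho\le 2$, yet membership is undecidable on trees.

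With the label-blind convention, define the relation on graphs with port set $L\subseteq[k']$ by testing only against fully labelled $F$. Let $\hat G$ be $G$ padded with isolated vertices on $[k']\setminus L$. As unlabelled graphs $G\sqcup F=\hat G\sqcup F$, so the classes are row classes of the fully labelled matrix. Join, introduce-edge, introduce-vertex and forget are then congruences by one-line rewritings. The final test $G\in\cP\iff G\sqcup E_{k'}\in\cP$ holds provided the root bag is full; otherwise $E_{k'}$ adds isolated vertices.

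\textbf{Number of ports.} Padding cannot supply your extra hypothesis $\rho(\cG_{k+1},\sqcup_{k+1},\cP)<\infty$. Embedding $k$-graphs into $(k+1)$-graphs bounds the $k$-port rank in terms of the $(k+1)$-port rank, which is the wrong direction, and the converse really fails. Take ``$(k+1)$-connected with a prime number of vertices''.
\begin{itemize}
\item A $k$-gluing with non-port vertices on both sides has a $k$-separator. So only the finitely many port-only rows and columns are nonzero, and the $k$-rank is finite.
\item Let $G_m$ be $K_{k+1}$ on the ports plus $m$ vertices joined to all ports. Then $G_m\sqcup_{k+1}G_{m'}$ is $(k+1)$-connected on $k+1+m+m'$ vertices for $m+m'\ge 1$, so the $(k+1)$-rank is infinite.
\end{itemize}

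To use the stated hypothesis, take as ports the adhesions $B_t\cap B_{\mathrm{parent}(t)}$ of a smooth decomposition; these all have size exactly $k$. Group the children of $t$ by their adhesion $S\subset B_t$ and glue each group along $S$, which is a $\sqcup_k$-step. Let $R$ be the bag graph together with the other groups and $F$. Then $G_t\sqcup F$ equals $G_S\sqcup_S R$ as an unlabelled graph, because the non-port vertices of $G_S$ meet nothing else. Hence the class of $G_t$, and at the root the answer, is a finite function of the bag graph and the at most $k+1$ group classes.
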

We shall discuss his proof briefly in Section \ref{se:myproof}.

The purpose of this paper is to formulate and prove an analogue of Theorem \ref{th:C-L} for a large class of recursively defined graph classes including
graph classes of bounded clique-width, modular width, and others.

\subsection{Outline of the paper}

In Section \ref{se:myproof} we prove Theorem \ref{th:C-L} and its generalization.
In Section \ref{se:applications} we show that the class of graphs of tree-width at most $k$, of clique-width at most $k$ and of modular width
at most $k$ can be defined as inductive classes.
In Section \ref{se:beyond} we discuss two cases where we do not know how to apply our proof methods: The case of graph classes of bounded twin-width,
and the general case of hypergraphs and matroids.
In Section \ref{se:conclu} we discuss what we have achieved and give directions for further research.

\section{Main Theorem}
\label{se:myproof}

In this section we formulate and prove a generalization of Theorem \ref{th:C-L}.
However, our proof is different from the proof in \cite[Chapter 6]{lovasz2012large}.
There the proof is algebraic and specially tailored for connection matrices.
Our approach follows more the ideas presented in \cite{courcelle2002fusion,makowsky2004algorithmic}.

Operations on $\tau$-structures are {\em invariant under $\tau$-isomorphisms}.
The notion of an {\em inductively defined class of $\tau$-structures} is taken form \cite{makowsky2004algorithmic}.
We first show the main idea for inductively defined classes with exactly one binary operation on $\tau$-structures.

\subsection{One binary operation on $\tau$-structures}
\begin{definition}[Inductively defined class of $\tau$-structures]
\ \\
Let $\cA$ be a finite set of $\tau$-structures, $\Box$ be a binary operation on $\tau$-structures, and
$\cF = \{F_1, \ldots, F_s\}$ be a finite set of unary operation on $\tau$-structures.

A class $\IND(\cA, \Box, \cF)$ of $\tau$-structures is  defined {\em  inductively} as follows:
\begin{enumerate}[(i)]
\item
$\cA \subseteq \IND(\cA, \Box, \cF)$.
\item
If $\fA, \fB \in \IND(\cA, \Box, \cF)$  so is $\fA \Box \fB$.
\item
If $F \in \cF$ and  $\fA \in \IND(\cA, \Box, \cF)$  so is $F(\fA)$. 
\end{enumerate}
If both the binary operation on $\tau$-structures $\Box$, and the unary operations in $\cF$ are computable 
in linear time (in the size of the $\tau$-structure), we say that the inductively defined class of $\tau$-structures
{\em effectively inductively} defined.
\end{definition}

Let $\cS$ a class of $\tau$-structures, and let $\cP \subseteq \cS$ be a property of $\tau$-structures.
We look at the Hankel matrix $\cH = \cH(\cS, \Box, \cP)$.

We define an equivalence relation $\sim_{\Box}^{\cP}$ on $\cS$ by
\begin{gather}
\fA \sim_{\Box}^{\cP} \fB \text{ iff }  \notag \\
\forall \fC \text{ we have } 
\fA \Box \fC \in \cP \text{ iff }  \fB \Box \fC \in \cP 
\notag
\end{gather}
For a $\tau$-structure $\fA$ we denote by $[\fA]_{\Box}^{\cP}$ its equivalence class.
We denote by $0_{\Box}$ a graph such that $0_{\Box} \Box \fA = \fA$ for all $\tau$-structures $\fA \in \cS$.
$0_{\Box}$ is called  a {\em neutral element for $\Box$}.

\begin{definition}[Smooth operations]
\ 
\begin{enumerate}[(i)]
\item
$\Box$ is smooth for $\cP$ if $\fA_1 \sim_{\Box}^{\cP} \fA_2$ and $\fA'_1 \sim_{\Box}^{\cP} \fA'_2$ then also $\fA_1 \Box \fA_2 \sim_{\Box}^{\cP} \fA'_1 \Box \fA'_2$.
\item
If $F$ is a unary operation on $\tau$-structures, $F$ is smooth for $\cP$ if  $\fA_1 \sim_{\Box}^{\cP} \fA_2$ then $F(\fA_1) \sim_{\Box}^{\cP} G(\fA_2)$.
\item
$\IND(\cA, \Box, \cF)$ is smooth for $\cP$, if $\Box$ and every $F \in \cF$ are smooth for $\cP$
and $\IND(\cA, \Box, \cF)$ has a neutral element for $\Box$.
\end{enumerate}
\end{definition}

\begin{remark}
Let $\cG$ be the class of graphs and $\Box = \times$ be the Cartesian product of graphs, $\cF$ a finite set of unary graph operations.
ad $\cP$ be a graph property.
$\IND(\cG, \times, \cF)$ and its Hankel matrix are  well defined. 
The graph consisting of one isolated vertex is a neutral element. 
However, 
$\IND(\cG, \times, \cF)$ is not effective, because $\times$ is not computable in linear time.
\end{remark}

\begin{proposition}
\label{pr:hankel}
We denote by  $\cH_{\fA}$ the row of $\cH$ labeled by $\fA$.
\begin{enumerate}[(i)]
\item 
$\fA \sim_{\Box}^{\cP} \fB$ iff $\cH_{\fA} =\cH_{\fB}$.
\item 
Assume $0_{\Box}$ exists.  Then $\fA \in \cP$ iff $\cH_{0_{\Box}, \fA} = 1$.
\item 
By Proposition \ref{obs:1},
$\rho = \rho(\cS, \Box, \cP) \leq \epsilon(\cS, \Box, \cP)$, i.e., it is at most the number of $\sim_{\Box}^{\cP}$-equivalence classes.
\item 
If the equivalence relation on $\tau$-structures $\equiv_{\Box}^{\cP}$ is a refinement of $\sim_{\Box}^{\cP}$
and $\equiv_{\Box}^{\cP}$ is smooth for $\Box$ and $\cP$, so is $\sim_{\Box}^{\cP}$.
\end{enumerate}
\end{proposition}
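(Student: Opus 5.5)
Items (i)–(iii) are essentially unwinding definitions; item (iv) is the only one needing an argument.

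For (i), the row $\cH_{\fA}$ is the $0/1$ vector indexed by the enumeration $S_j$ of $\cS$, with entry $1$ at $j$ iff $\fA \Box S_j \in \cP$. Since $\cP$ is closed under $\tau$-isomorphisms and $\Box$ is invariant under them, quantifying over all $\fC$ is the same as quantifying over the representatives $S_j$. Hence $\cH_{\fA} = \cH_{\fB}$ says exactly that for all $\fC$, $\fA \Box \fC \in \cP$ iff $\fB \Box \fC \in \cP$, which is $\fA \sim_{\Box}^{\cP} \fB$.

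For (ii), the entry $\cH_{0_{\Box},\fA}$ is $1$ iff $0_{\Box} \Box \fA \in \cP$, and $0_{\Box}\Box\fA = \fA$ by neutrality. For (iii), (i) gives a bijection between distinct rows and $\sim_{\Box}^{\cP}$-classes, so $\epsilon(\cS,\Box,\cP)$ is the number of classes. The inequality $\rho \le \epsilon$ is then Proposition~\ref{obs:1}(ii), or directly: the row space is spanned by the distinct rows.

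For (iv), I read the hypothesis as follows: $\equiv$ is a congruence for $\Box$ (and the unary operations), i.e.\ $\fA_1\equiv\fA_2$ and $\fB_1\equiv\fB_2$ imply $\fA_1\Box\fB_1\equiv\fA_2\Box\fB_2$, and $\equiv$ refines $\sim$. I do \emph{not} take refinement to mean merely that $\equiv$-equivalent structures agree on $\cP$; that is too weak, and for that reading I would argue it is not enough. The goal is to show that $\sim_{\Box}^{\cP}$ is itself a congruence for $\Box$.

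So let $\fA_1\sim\fA_2$ and $\fB_1\sim\fB_2$, and fix any $\fC$. To compare $(\fA_1\Box\fB_1)\Box\fC$ with $(\fA_2\Box\fB_2)\Box\fC$ with respect to membership in $\cP$, I would use associativity of $\Box$ to regroup. I would then replace $\fA_1$ by $\fA_2$ using the $\sim$-definition with context $\fB_1\Box\fC$, and replace $\fB_1$ by $\fB_2$ in a second step. That step needs commutativity of $\Box$, or the symmetric variant of $\sim$ with contexts on both sides.

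The main obstacle is that the paper does not state associativity or commutativity of $\Box$. Without them, $\sim$ only controls right contexts $\fA\Box\fC$, and the argument stalls. I would first try to route the proof through $\equiv$: smoothness of $\equiv$ handles composite contexts, and the refinement hypothesis transfers back to $\sim$. If that fails, I would state the needed assumptions explicitly: either $\Box$ is associative and commutative, which holds for $\sqcup_k$, $\sqcup$ and $\bowtie$, or $\sim$ is taken two-sided. For unary $F$ I would argue in the same way via the congruence property of $\equiv$.
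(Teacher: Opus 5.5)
The paper states this proposition without proof, so your write-up has to stand on its own. Your arguments for (i)--(iii) are correct. The real problem is (iv), and it is not a missing step: as stated, (iv) is false, so your plan to ``first try to route the proof through $\equiv$'' cannot succeed.

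Take $\equiv$ to be $\tau$-isomorphism. Operations are isomorphism-invariant and $\cP$ is isomorphism-closed. So $\cong$ is smooth for every $\Box$ and refines $\sim_{\Box}^{\cP}$ for every $\cP$. Hence (iv) would say that $\sim_{\Box}^{\cP}$ is \emph{always} smooth. This already fails for concatenation on $\{a,b\}^*$ with $\cL=\{ab\}$:
\begin{itemize}
\item $b \sim_{\circ}^{\cL} aa$, since no extension of either word lies in $\cL$;
\item $ab \not\sim_{\circ}^{\cL} aaa$, witnessed by the context $\varepsilon$;
\item yet $ab = a\circ b$ and $aaa = a\circ aa$.
\end{itemize}
This is exactly the gap between the Myhill--Nerode right congruence and the syntactic congruence. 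It also survives in the logical setting that motivates (iv): the rank-$q$ $\MSOL$-equivalence for a sentence defining $\cL$ is a congruence refining $\sim_{\circ}^{\cL}$.

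Your remark on unary operations fails too, even for associative and commutative $\Box$. Take $\Box=\sqcup$ on colored graphs and $\cP$ connectivity. Two isolated vertices of color $1$ and two isolated vertices of color $2$ are $\sim$-equivalent, because every $\sqcup$-extension of either is disconnected. Yet $\mathrm{fuse}_1$ turns the first into a single vertex and leaves the second disconnected, while $\cong$ is preserved by $\mathrm{fuse}_1$.

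Your fallback diagnosis is the right one. If $\Box$ is associative and commutative and $\cS$ is closed under $\Box$, your regrouping argument does show that $\sim_{\Box}^{\cP}$ is smooth for $\Box$. But it does so unconditionally, without ever using the hypothesis on $\equiv$, so it proves a different statement, not (iv). What the hypothesis on $\equiv$ actually transfers is finiteness of index, not smoothness: every $\sim_{\Box}^{\cP}$-class is a union of $\equiv$-classes. This is what logic-based proofs exploit, by building the look-up table on $\equiv$-classes rather than on $\sim$-classes. Your treatment of (iv) should therefore end with a counterexample and a corrected hypothesis, not with a plan to prove it as stated.
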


\begin{remark}
If $\cP$ is definable in $\CMSOL$ by a formula $\phi$
we look at the quantifier rank $q$ of $\phi$.
The equivalence relation $\sim_{\Box}^{\cP}$ then corresponds to satisfying the same $\tau$-sentences of quantifier rank $q$.
\end{remark}

Given an inductively defined class of $\tau$-structures $\cS = \IND(\cA, \Box, \cF)$ and $\cP$ a property of structures in $\cS$.
Assume $\rho = \cH(\cS, \Box, \cP)=m$, and let $\fA_1, \ldots, \fA_m$ be pairwise nonequivalent. In the Hankel matrix $\cH(\cS, \Box, \cP)$
the rows labeled $\fA_1, \ldots, \fA_m$ form a basis for the Hankel matrix.

Let $\alpha: \cS \rightarrow \{\fA_1, \ldots, \fA_m\}$ be a function such that $\alpha(\fA) = \fA_i$ for $\fA \sim_{\sim_{\Box}}^{\cP} \fA_i$.

\begin{definition}[Look-up table $(\hat{\cH}, \hat{\cF})$] 
Given an inductively defined class of graphs $\cS =\IND(\cA, \Box, \cF)$ of $\tau$-structures and $\cP \subset \cS$  a property of $\tau$-structure.
Let $\rho = \cH(\cS, \Box, \cP)=m$. We construct a look-up table as follows:
\begin{enumerate}[(i)]
\item
We compute $\alpha(\fA)$ for every $\fA \in \cA$. This has at most $m$  distinct elements.
\item
For $\Box$ we build an $m \times m$-matrix $\hat{\cH}$ whose rows and columns  and rows are labeled $\\fA_1, \ldots, \\fA_m$ and
$\hat{\cH}_{\alpha_i, \alpha_j} = \alpha(\fA_i \Box \fA_j)$.
This has at most $m \times m$ distinct elements.
\item
Let $\cF=\{F_1, \ldots, F_n\}$. We build an $m \times n$-matrix $\hat{\cH}$ where the rows are labeled by  $\{\fA_1, \ldots, \fA_m\}$ 
and the columns are labeled by $\{F_1, \ldots, F_n\}$.
The entries of $\hat{\cF}$ are  
$\hat{\cF}_{\alpha_i, F_j} = \alpha(F_j(\fA_i))$.
This has at most $m \times n$ distinct elements.
\end{enumerate}
\end{definition}

\begin{proposition}
\ \\
Let $\rho = \cH(\cS, \Box, \cP)=m$.
Assume the inductive class of structures $\cS = \IND(\cA, \Box, \cF)$ is smooth for $\Box$ and $\cP$.
Then the look-up table  $(\hat{\cH}, \hat{\cF})$ is well defined and depends only on the equivalence classes $[\fA]_{\Box}^{\cP}$ for $\fA \in \cS$.
\end{proposition}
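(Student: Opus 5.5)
The plan is to show that every entry of the look-up table is determined by equivalence classes alone, using the smoothness hypotheses. I read the statement with $m$ the number of $\sim_{\Box}^{\cP}$-classes, that is, $m=\epsilon(\cS,\Box,\cP)$. This is finite because $\rho$ is finite, by Proposition \ref{obs:1}. I also read it with $\fA_1,\ldots,\fA_m$ a fixed system of representatives, one from each class. Then $\alpha(\fA)$ is the unique representative with $\alpha(\fA)\sim_{\Box}^{\cP}\fA$. By Proposition \ref{pr:hankel}(i), this is the unique $\fA_i$ with $\cH_{\fA_i}=\cH_{\fA}$. So $\alpha$ is well defined, and $\alpha(\fA)=\alpha(\fB)$ iff $[\fA]_{\Box}^{\cP}=[\fB]_{\Box}^{\cP}$. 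Equivalently, $\alpha$ factors through the quotient $\cS/\sim_{\Box}^{\cP}$.

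The first step handles the initial entries $\alpha(\fA)$ for $\fA\in\cA$. These are values of $\alpha$, so they depend only on the classes $[\fA]_{\Box}^{\cP}$ by the factorization above.

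The second step handles $\hat{\cH}$. Let $\fB\sim_{\Box}^{\cP}\fA_i$ and $\fB'\sim_{\Box}^{\cP}\fA_j$. Smoothness of $\Box$ (Definition of smooth operations, (i)) gives $\fB\Box\fB'\sim_{\Box}^{\cP}\fA_i\Box\fA_j$. Hence $\alpha(\fB\Box\fB')=\alpha(\fA_i\Box\fA_j)=\hat{\cH}_{\alpha_i,\alpha_j}$. So the entry is independent of the chosen representatives, and it correctly computes the class of any product from the classes of its factors.

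The third step handles $\hat{\cF}$. Smoothness of each $F_j\in\cF$ (item (ii) of the same definition) gives the following: if $\fB\sim_{\Box}^{\cP}\fA_i$, then $F_j(\fB)\sim_{\Box}^{\cP}F_j(\fA_i)$. Therefore $\alpha(F_j(\fB))=\hat{\cF}_{\alpha_i,F_j}$. Together these steps show the table is a well-defined finite structure on the $m$ classes: the quotient algebra of $(\cS,\Box,\cF)$ by $\sim_{\Box}^{\cP}$. A different choice of representatives produces the same table up to relabelling.

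There is no real obstacle; the argument is a direct unwinding of the definitions. The only delicate point is the definition of smoothness, which is phrased slightly loosely: the hypotheses should pair $\fA_1$ with $\fA'_1$ and $\fA_2$ with $\fA'_2$, and the "$G$" in item (ii) should be $F$. I would state the corrected reading explicitly: equivalence is a congruence for $\Box$ and for each $F$. I would also remark that the neutral element $0_{\Box}$ is not needed for well-definedness. It will be used afterwards, via Proposition \ref{pr:hankel}(ii), to decide membership in $\cP$ from the class $\alpha(\fA)$, namely by reading the entry $\cH_{0_{\Box},\alpha(\fA)}$.
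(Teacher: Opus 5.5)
Your argument is correct and matches the paper's intended reasoning. The paper states this proposition without proof, and its content is exactly that smoothness makes $\sim_{\Box}^{\cP}$ a congruence for $\Box$ and for each $F\in\cF$, read as you read it: pairing $\fA_1\sim_{\Box}^{\cP}\fA'_1$ with $\fA_2\sim_{\Box}^{\cP}\fA'_2$, and with $F$ in place of $G$.

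Taking $m=\epsilon(\cS,\Box,\cP)$ instead of the rank is in fact necessary, because over $GF(2)$ the two can differ. For $\sqcup$ on graphs with $\cP=\{G: |V(G)|\not\equiv 0 \pmod 3\}$, the three residue classes have rows $(0,1,1),(1,1,0),(1,0,1)$, so $\rho=2<3=\epsilon$. With only $\rho$ representatives, $\alpha$ would not be total.
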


Let $\cS = \IND(\cA, \Box, \cF)$ and $\fA \in \cS$. Let $T(\fA)$ be a {\em parse-tree of the inductive definition of $\fA$} 
where the leaves are in $\cA$
and the ancestors of $\fB_1, \fB_2$ are $\fB_1 \Box \fB_2$ or $F_1(\fB_1)$ and $F_2(\fB_2)$ for some $F_1, F_2 \in \cF$ and the root is $\fA$.

\begin{definition}[The Main Algorithm]
\label{algorithm}
\ \\
\begin{description}
\item[Preprocessing:]
Compute the look-up table. This is a finite object.
\item[Compute $\alpha(\fA)$:]
Use the parse tree of $\fA$. For the leaves use the look-up table.
\\
For an ancestor $\fB = \fB_1 \Box \fB_2$ use the fact that $\Box$ is computable in linear time, then use the look-up table to compute
$\alpha(\fB)$. Similarly for the operations from $\cF$.
\item[Check if $\fA \in \cP$:]
Use the look-up table to check whether $\cH_{0_{\Box}, \alpha(\fA)} = 1$.
\end{description}
\end{definition}

\begin{theorem}[Main Theorem]
\label{th:main}
Let $\cP$ be a property of $\tau$-structures.
Assume the inductive class of structures $\cS = \IND(\cA, \Box, \cF)$ is effective and smooth for $\Box$ and $\cP$,
and that $\cH = \cH(\cS, \Box, \cP)$ has finite rank $\rho(\cH)=m$,
\begin{enumerate}[(i)]
\item
Then there exists a finite look-up table.
\item
If a $\tau$-structure $\fA \in \cS$  is given with its parse-tree $T(\fA)$,
checking whether  $\fA \in \cP$
is in $\FPT$ in the parameter $m$.
\end{enumerate}
\end{theorem}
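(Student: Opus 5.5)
Part (i) reduces to finiteness of the equivalence relation together with smoothness. Since $\rho(\cH)=m$ is finite, Proposition \ref{obs:1} gives that $\epsilon(\cS,\Box,\cP)$ is finite, with $\epsilon \le 2^m$. By Proposition \ref{pr:hankel}(i) this is exactly the number of $\sim_{\Box}^{\cP}$-classes.

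So I fix representatives $\fA_1,\ldots,\fA_{\epsilon}$, one per class. The map $\alpha$ should send $\fA$ to the representative of its class; for the lookup table to work we need all classes, not merely $m$ rows forming a basis. I therefore read the paper's ``$m$'' as the number of classes, which is bounded by $2^m$.

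Smoothness of $\Box$ and of each $F\in\cF$ then says that the class of $\fA\Box\fB$ depends only on the classes of $\fA$ and $\fB$, and the class of $F(\fA)$ only on the class of $\fA$. Hence the entries $\hat{\cH}_{i,j}=\alpha(\fA_i\Box\fA_j)$ and $\hat{\cF}_{i,F}=\alpha(F(\fA_i))$ are well defined, which is the preceding proposition. Adding $\alpha$ on the finite set $\cA$ and the bit vector $(\cH_{0_\Box,\fA_i})_i$, the table has size $O(\epsilon^2+\epsilon|\cF|+|\cA|)$, a function of $m$ alone. This proves (i).

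For (ii) I will run the Main Algorithm (Definition \ref{algorithm}) bottom-up on $T(\fA)$ and prove correctness by induction on the parse tree. The invariant is that the label computed at a node equals $\alpha$ of the structure at that node.
\begin{itemize}
\item At a leaf in $\cA$, the label is read directly from the table.
\item At a node $\fB_1\Box\fB_2$, the induction hypothesis gives labels $\fA_i=\alpha(\fB_1)$ and $\fA_j=\alpha(\fB_2)$. Smoothness yields $\fB_1\Box\fB_2\sim\fA_i\Box\fA_j$, so $\alpha(\fB_1\Box\fB_2)=\hat{\cH}_{i,j}$.
\item At a node $F(\fB)$, the same argument applies with $\hat{\cF}$.
\end{itemize}
Note that the algorithm never needs to build the intermediate structures $\fB_1\Box\fB_2$; only table lookups are used. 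Effectiveness matters only in that the parse tree is a faithful linear-size description.

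At the root, membership is decided by Proposition \ref{pr:hankel}(ii). Since $0_\Box$ exists by smoothness of $\IND$, $\fA\in\cP$ iff $\cH_{0_\Box,\fA}=1$. That entry depends only on the row of $\fA$, which equals the row of $\alpha(\fA)$ by Proposition \ref{pr:hankel}(i).

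Each node costs $O(1)$ table accesses, with the table of size $f(m)$. The total running time is therefore $f(m)\cdot|T(\fA)|$, which is FPT in $m$.

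The main obstacle is not the induction but the status of the lookup table. The argument shows the table exists and has size bounded in $m$. It does not show the table is computable from $\cP$, since $\cP$ is an arbitrary (possibly undecidable) property, as the continuum-many-properties theorem shows. I would therefore state (ii) in the nonuniform sense: for each fixed $\cP$ there is an algorithm with hardwired table running in time $f(m)\cdot|T(\fA)|$. This is exactly the sense in which FPT can be claimed without any logical definability.

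A secondary point to handle carefully is that $\alpha$ must range over all $\epsilon\le 2^m$ classes rather than $m$ basis rows. With $m$ basis rows the composition $\alpha(\fA_i\Box\fA_j)$ is not determined, whereas with class representatives it is.
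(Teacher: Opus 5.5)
Your proof follows the paper's route: finitely many $\sim_{\Box}^{\cP}$-classes, a look-up table made well defined by smoothness, bottom-up evaluation of $\alpha$ along $T(\fA)$, and a final test via $0_{\Box}$. The repairs you make along the way are correct.
\begin{itemize}
\item $\alpha$ must range over all $\epsilon\le 2^m$ classes. A structure whose row is a nontrivial combination of the $m$ basis rows is equivalent to none of them, so the paper's $\alpha$ is undefined on it.
\item The table is in general not computable from $\cP$, so the $\FPT$ claim is nonuniform.
\item Internal nodes need only table look-ups. This is what actually gives time linear in $|T(\fA)|$; building every intermediate $\fB_1\Box\fB_2$, as the paper's algorithm suggests, could cost the sum of their sizes.
\end{itemize}

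The one step that fails as written is the final membership test. The entry $\cH_{0_{\Box},\fA}$ sits in the \emph{column} of $\fA$. But $\fA\sim_{\Box}^{\cP}\alpha(\fA)$ only says that the \emph{rows} of $\fA$ and $\alpha(\fA)$ agree, so your claim that this entry depends only on the row of $\fA$ is unjustified.

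What you need is $\fA\in\cP\iff\alpha(\fA)\in\cP$. Equal rows give this only if $0_{\Box}$ is neutral on the right, $\fA\Box 0_{\Box}=\fA$, because then membership is the row entry $\cH_{\fA,0_{\Box}}$. Under the one-sided condition $0_{\Box}\Box\fA=\fA$ from the paper's definition, the statement is actually false. Take $\fA\Box\fB:=\fB$, $\cA=\{K_1\}$, and $\cF$ consisting of adding an isolated vertex, so that $\cS$ is the class of edgeless graphs. Then:
\begin{itemize}
\item every structure is a left neutral element;
\item all rows of $\cH$ coincide, so there is one class, smoothness is trivial and $\rho\le 1$;
\item all operations are linear time;
\item yet $\cP$ may be an arbitrary, even undecidable, set of edgeless graphs, and your table returns the same bit on every input.
\end{itemize}

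So you must assume a two-sided neutral element. This is automatic when $\Box$ is commutative, as for disjoint union and gluing in all of the paper's applications. You should then store the bits $\cH_{\fA_i,0_{\Box}}$ rather than $\cH_{0_{\Box},\fA_i}$. The paper's Main Algorithm, which tests $\cH_{0_{\Box},\alpha(\fA)}$, contains the same left/right slip, so you inherited it rather than introduced it.
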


To prove Theorem \ref{th:C-L} from Theorem \ref{th:main} we have to show that the class of graphs ($k$-graphs) of tree-width at most $k$
has an equivalent definition as an effective and smooth inductive  class.
Instead of $k$-graphs we use colored graphs with at most $k$ colors with $\Box$ the disjoint unions and fusion of colors instead
of gluing. This was done in detail in \cite{courcelle2002fusion} and in \cite[Section 4.21]{makowsky2004algorithmic}.
We will repeat the proof in Section \ref{se:applications}.

\subsection{Several binary operations on $\tau$-structures}

Let $\B = \{ \Box_1, \ldots, \Box_s\}$ be a finite set of binary operations on $\tau$-structures.
The class $\cS  = \IND(\cA, \B, \cF)$ of $\tau$-structures is {\em inductively defined analogously}.
We want to prove the analogue of Theorem \ref{th:main} for this case.

For $i \in [s]$ We look at the Hankel matrices $\cH_{\B_i}$ for each $\Box_i \in \B$ separately.
For a property of $\tau$-structures $\cP$ let $\sim_{\Box_i}^{\cP}$ be the equivalence relation on $\tau$-structures
defined as
\begin{quote}
$\fA_1 \sim_{\Box_i}^{\cP} \fA_2$ iff for all $\tau$-structures $\fC \in \cS$
\\
$\fA_1 \Box_i \fC \in \cP$ iff $\fA_2 \Box_i \fC \in \cP$. 
\end{quote}
Even if all the Hankel matrices $\cH_{\B_i}$ for $\Box_i \in \B$ have finite rank they their respective equivalence
relations $\sim_{\Box_i}^{\cP}$ may be incompatible.
We therefore  introduce a new equivalence relation, which is the coarsest joint refinement of the equivalence relations
$\sim_{\Box_i}^{\cP}$.

\begin{definition}
The equivalence relation $\sim_{\B}^{\cP}$ on $\cS$ is the intersection of the equivalence relations
$\sim_{\Box_i}^{\cP}$ for $i \leq s$.
\end{definition}

\begin{observation}
\label{obs:3}
If all the equivalence relations $\sim_{\Box_i}^{\cP}$ have only finitely many equivalence classes,
so does $\sim_{\Box_i}^{\cP}$. 
\end{observation}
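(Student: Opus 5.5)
The statement should read: if every $\sim_{\Box_i}^{\cP}$ has finitely many classes, then so does $\sim_{\B}^{\cP}$, the intersection. The plan is a direct counting argument via the map sending a structure to the tuple of its classes.

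For each $i \le s$, let $E_i = \cS/\!\sim_{\Box_i}^{\cP}$ be the set of equivalence classes of $\sim_{\Box_i}^{\cP}$. By hypothesis each $E_i$ is finite, say $|E_i| = \epsilon_i$. By Proposition \ref{pr:hankel}(i), $\epsilon_i = \epsilon(\cS,\Box_i,\cP)$, the number of distinct rows of $\cH_{\B_i}$. By Proposition \ref{obs:1}, finiteness of $\epsilon_i$ is equivalent to finiteness of the rank $\rho(\cS,\Box_i,\cP)$. So it does not matter whether the hypothesis is phrased via ranks or via equivalence numbers.

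Define the map
\[
\Phi:\cS \to E_1\times\cdots\times E_s,\qquad \Phi(\fA) = \bigl([\fA]_{\Box_1}^{\cP},\ldots,[\fA]_{\Box_s}^{\cP}\bigr).
\]
By the definition of $\sim_{\B}^{\cP}$ as the intersection of the relations $\sim_{\Box_i}^{\cP}$, we have $\fA\sim_{\B}^{\cP}\fB$ if and only if $\fA\sim_{\Box_i}^{\cP}\fB$ for every $i$, which holds if and only if $\Phi(\fA)=\Phi(\fB)$. Hence $\Phi$ induces an injection from $\cS/\!\sim_{\B}^{\cP}$ into $E_1\times\cdots\times E_s$. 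Therefore the number of $\sim_{\B}^{\cP}$-classes is at most $\prod_{i=1}^s \epsilon_i$, which is finite. By Proposition \ref{obs:1}(ii), this is in turn at most $\prod_{i=1}^s 2^{\rho(\cS,\Box_i,\cP)}$.

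There is no substantive obstacle. The only thing to be careful about is reading the statement with $\sim_{\B}^{\cP}$ in the conclusion; otherwise it is a tautology. I would also record the explicit bound, since it is what is needed later for the look-up table in the several-operations analogue of Theorem \ref{th:main}. There, the role of $m$ is played by the number of $\sim_{\B}^{\cP}$-classes, and the parameter in the $\FPT$ claim would be bounded in terms of $\rho(\cS,\Box_1,\cP),\ldots,\rho(\cS,\Box_s,\cP)$.
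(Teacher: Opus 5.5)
Your reading of the typo is right: the conclusion should concern $\sim_{\B}^{\cP}$. Your injection of the $\sim_{\B}^{\cP}$-classes into $E_1\times\cdots\times E_s$ is exactly the one-line argument the paper leaves implicit (it states this observation without proof), and the explicit bound $\prod_i \epsilon_i \le 2^{\sum_i \rho(\cS,\Box_i,\cP)}$, obtained from $\epsilon \le 2^{\rho}$, is also correct.
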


\begin{definition}
{\em $\cS  = \IND(\cA, \B, \cF)$ is smooth for $\B$ for and $\cP$}
if $\sim_{\B}^{\cP}$ is smooth for $\B$ for and $\cP$.
$\cS  = \IND(\cA, \B, \cF)$ is effective if all the operations $\Box_i \in \B$ and $F_i \in \cF$ or computable in linear time.
Furthermore we require that for at least one $i \leq s$ there is a neutral element $0_{\Box_i}$. 
\end{definition}

\begin{observation}
Assume that $\sim_{\B}^{\cP}$ is smooth for $\B$ for and $\cP$. Then $\sim_{\Box_i}^{\cP}$ is smooth for every $i \in [s]$,
by Proposition \ref {obs:1}.
\end{observation}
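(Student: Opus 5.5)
The plan is to derive the statement from Proposition \ref{pr:hankel}(iv), applied separately to each operation $\Box_i$. (The statement itself cites Proposition \ref{obs:1}, but the transfer of smoothness from a finer relation to a coarser one is exactly part (iv) of Proposition \ref{pr:hankel}.) Fix $i \in [s]$ and take $\equiv_{\Box_i}^{\cP}$ to be the joint relation $\sim_{\B}^{\cP}$. The first step is to check that $\sim_{\B}^{\cP}$ refines $\sim_{\Box_i}^{\cP}$. This follows from the definition of $\sim_{\B}^{\cP}$ as the intersection of the relations $\sim_{\Box_j}^{\cP}$, $j \leq s$: if $\fA \sim_{\B}^{\cP} \fB$, then in particular $\fA \sim_{\Box_i}^{\cP} \fB$.

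The second step is to check the smoothness hypothesis of Proposition \ref{pr:hankel}(iv) for $\Box_i$. By assumption $\sim_{\B}^{\cP}$ is smooth for all of $\B$, so in particular for $\Box_i$. Explicitly, if $\fA_1 \sim_{\B}^{\cP} \fA_2$ and $\fA'_1 \sim_{\B}^{\cP} \fA'_2$, then $\fA_1 \Box_i \fA'_1 \sim_{\B}^{\cP} \fA_2 \Box_i \fA'_2$. The same holds for every $F \in \cF$. Proposition \ref{pr:hankel}(iv) then gives that $\sim_{\Box_i}^{\cP}$ is smooth for $\Box_i$ and $\cP$. Since $i$ was arbitrary, this proves the statement for every $i \in [s]$.

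The main obstacle is that Proposition \ref{pr:hankel}(iv) is not proved in the paper, and the passage from the finer to the coarser relation is where the real content lies. We must show that $\fA_1 \sim_{\Box_i}^{\cP} \fA_2$ and $\fA'_1 \sim_{\Box_i}^{\cP} \fA'_2$ imply $\fA_1 \Box_i \fA'_1 \sim_{\Box_i}^{\cP} \fA_2 \Box_i \fA'_2$. However, the hypothesis only says something about pairs that are $\sim_{\B}^{\cP}$-equivalent, which is a stronger condition. I would first try to prove this directly from the definition. Fix a test structure $\fC$ and compare $(\fA_1 \Box_i \fA'_1) \Box_i \fC$ with $(\fA_2 \Box_i \fA'_2) \Box_i \fC$. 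If $\Box_i$ is associative (and, where needed, commutative) up to isomorphism, which holds for the disjoint-union and fusion-type operations used in the applications, we can rewrite the first expression as $\fA_1 \Box_i (\fA'_1 \Box_i \fC)$. We then replace $\fA_1$ by $\fA_2$ using $\sim_{\Box_i}^{\cP}$ with test structure $\fA'_1 \Box_i \fC$. Finally we reassociate and commute to replace $\fA'_1$ by $\fA'_2$ in the same way. Since $\cP$ is closed under isomorphism, this yields the claim for $\Box_i$ without using $\sim_{\B}^{\cP}$ at all.

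If such algebraic laws are not available for some $\Box_i$, I would instead read the definition of smoothness for $\B$ as asserting the required closure for each $\sim_{\Box_i}^{\cP}$ directly, and restrict the statement to that reading. I expect this step, not the bookkeeping in the first two paragraphs, to be where any gap in the argument lies.
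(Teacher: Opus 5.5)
\textbf{The proposal has a gap that cannot be closed: the statement is false in this generality.}

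Your reduction to Proposition~\ref{pr:hankel}(iv), with $\equiv$ taken to be $\sim_{\B}^{\cP}$, is what the paper's one-line justification intends; the cited Proposition~\ref{obs:1} is irrelevant. You are right that everything hinges on (iv). But (iv) is not merely unproven: it is false, and the Observation fails with it.

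Operations are isomorphism-invariant and $\cP$ is isomorphism-closed. So $\tau$-isomorphism is a smooth refinement of every $\sim_{\Box}^{\cP}$, and (iv) would make every operation smooth for every property. Now take words over $\{a,b\}$, $\Box=\circ$ (concatenation) and $\cP=\{abb\}$. Then $b\sim_{\circ}^{\cP} bbb$, since no extension of either word lies in $\cP$. But $a\circ b=ab\not\sim_{\circ}^{\cP} abbb=a\circ bbb$, as the test $\fC=b$ shows. This is just the fact that the Myhill--Nerode relation is only a right congruence, already in the paper's own B\"uchi--Elgot--Trakhtenbrot setting. Requiring finite index does not rescue (iv), since the syntactic congruence of $\{abb\}$ is a finite-index congruence refining $\sim_{\circ}^{\cP}$.

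For the Observation itself, add a second operation: $\Box_2(u,v)=abb$ if $u,v$ lie in the same syntactic class, and $\Box_2(u,v)=a$ otherwise. Then $\sim_{\Box_2}^{\cP}$ is the syntactic congruence. Hence $\sim_{\B}^{\cP}$ for $\B=\{\circ,\Box_2\}$ is also the syntactic congruence, which is smooth for both operations. Everything here has finite index and is linear-time computable, yet $\sim_{\circ}^{\cP}$ is not smooth for $\circ$.

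So no argument proves the statement as stated. Your fallback, reading smoothness for $\B$ as already asserting closure of each $\sim_{\Box_i}^{\cP}$, simply assumes the conclusion.

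What you do prove correctly is the salvageable case. If $\Box_i$ is associative and commutative up to isomorphism, then $\sim_{\Box_i}^{\cP}$ is smooth for $\Box_i$ outright, with no use of $\sim_{\B}^{\cP}$ at all. The concatenation example shows that commutativity really is needed for your second substitution step. This covers disjoint union and join. The Observation should therefore either carry that hypothesis or be dropped: the look-up tables for several operations are built on $\sim_{\B}^{\cP}$-classes and only need $\sim_{\B}^{\cP}$ itself to be smooth.
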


Assume $\sim_{\Box}^{\cP}$ has exactly $t \in \N^+$ equivalence classes and
let $\{\fA_1, \ldots, \fA_t\}$ be mutually nonequivalent $\tau$-structures.
Let $\alpha$ be the function which maps a $\tau$-structure $\fA \in \cS$ to its equivalence class $[\fA] = [\fA_j]$ for some $j \in [t]$.
$\alpha$ can be computed from Hankel matrices $\cH_{\B_i}$, $i \in [s]$ by inspecting their rows.

We build $s$ look-up tables $\hat{\cH}^k, k \in [s]$. Their rows and columns are labeled by $\{\fA_1, \ldots, \fA_t\}$.
The entry $\hat{\cH}i_{\fA_i, \fA_j}$ is given by $\alpha(\fA_i \Box_k \fA_j)$.
The modified Hankel matrices now have all $t$ rows and columns.

With these definitions the look-up tables are defined and the algorithm from Definition \ref{algorithm} can be adapted.

\begin{theorem}[Main Theorem, general case]
\label{th:main-a}
Let $\cP$ be a property of $\tau$-structures.
Assume the inductive class of structures $\cS = \IND(\cA, \B, \cF)$ is effective and smooth for $\B$ and $\cP$.
and that $\cH = \cH(\cS, \B, \cP)$ has finite rank $\rho(\cH)=m$,
\begin{enumerate}[(i)]
\item
Then there exists a finite look-up table.
\item
If a $\tau$-structure $\fA \in \cS$  is given with its parse-tree $T(\fA)$,
checking whether  $\fA \in \cP$
is in $\FPT$ in the parameter $m$.
\end{enumerate}
\end{theorem}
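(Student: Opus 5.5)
The proof runs the one-operation argument of Theorem \ref{th:main} with $\sim_{\B}^{\cP}$ in place of $\sim_{\Box}^{\cP}$. By Proposition \ref{pr:hankel}(i), the $\sim_{\Box_i}^{\cP}$-classes correspond to the distinct rows of $\cH_{\B_i}$. Finite rank gives finitely many distinct rows by Proposition \ref{obs:1}. Hence each $\sim_{\Box_i}^{\cP}$ has finitely many classes, say $\epsilon_i \leq 2^{\rho_i}$. By Observation \ref{obs:3}, the intersection $\sim_{\B}^{\cP}$ then has $t \leq \prod_{i\in[s]} \epsilon_i$ classes, which is bounded by $2^{sm}$ if $m$ bounds the ranks. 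We fix representatives $\fA_1,\ldots,\fA_t$ and the map $\alpha$ sending each structure to the representative of its class. For $(i)$ we build the tables $\hat{\cH}^k_{\fA_i,\fA_j}=\alpha(\fA_i\Box_k\fA_j)$ for $k\in[s]$, the table $\hat{\cF}_{\fA_i,F}=\alpha(F(\fA_i))$, and the list $\alpha(\fA)$ for $\fA\in\cA$. We also record, for the distinguished $i_0$ with neutral element $0_{\Box_{i_0}}$, the bit $\cH_{\B_{i_0}}(0_{\Box_{i_0}},\fA_j)$ for each $j$.

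The key step is that these tables are well defined, i.e.\ independent of the chosen representatives. Here smoothness of $\cS$ for $\B$ and $\cP$ is used directly. If $\fA\sim_{\B}^{\cP}\fA'$ and $\fB\sim_{\B}^{\cP}\fB'$, then $\fA\Box_k\fB\sim_{\B}^{\cP}\fA'\Box_k\fB'$ for every $k$, and $F(\fA)\sim_{\B}^{\cP}F(\fA')$. So $\alpha(\fA\Box_k\fB)=\hat{\cH}^k_{\alpha(\fA),\alpha(\fB)}$ and $\alpha(F(\fA))=\hat{\cF}_{\alpha(\fA),F}$. The membership bit is also a class invariant. By Proposition \ref{pr:hankel}(ii), $\fA\in\cP$ iff $0_{\Box_{i_0}}\Box_{i_0}\fA\in\cP$. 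If $\fA\sim_{\B}^{\cP}\fA'$, then in particular $\fA\sim_{\Box_{i_0}}^{\cP}\fA'$. To conclude that $\fA\in\cP$ iff $\fA'\in\cP$ one needs the relation to be usable with $\fA$ on the right of $\Box_{i_0}$. I would handle this either by assuming $0_{\Box_{i_0}}$ is a two-sided neutral element, so that $\fA=\fA\Box_{i_0}0_{\Box_{i_0}}$ and the row definition applies, or by using the column version of Proposition \ref{pr:hankel}(ii). This bookkeeping about sidedness is the point I expect to need the most care, since the paper's definitions fix the left argument.

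For $(ii)$ we run the adapted Main Algorithm (Definition \ref{algorithm}) bottom-up on $T(\fA)$. At each leaf we read $\alpha$ from the table. At a $\Box_k$-node we combine the children's values through $\hat{\cH}^k$, and at an $F$-node we use $\hat{\cF}$. We prove by induction on the parse tree that the value computed at each node equals $\alpha$ of the structure that node represents; both the base case and the inductive step are exactly the well-definedness statements above. At the root we output the stored membership bit. Each node costs $O(1)$ table lookups. The structures themselves never need to be built, though effectiveness guarantees they could be built in linear time. The running time is therefore $O(|T(\fA)|)$ after a preprocessing step whose size depends only on $t$, $s$ and $|\cF|$, hence only on $m$ and the fixed data, which gives $\FPT$ in $m$.

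I expect the main obstacle to be the gap between finite rank and the existence of the tables. Finiteness is guaranteed by the argument above, but the tables are only shown to exist non-constructively: $\alpha$ on arbitrary structures is not computed from the infinite matrices. I would state $(i)$ as a pure existence claim. For $(ii)$ I would read $\FPT$ non-uniformly, with the algorithm hard-coding the finite table for the given $\cP$, as is usual for Courcelle-type theorems without definability.
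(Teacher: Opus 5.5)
Your proposal is correct and follows the paper's own argument, which the paper only sketches. You intersect the relations $\sim_{\Box_i}^{\cP}$, tabulate each $\Box_k$ and each $F$ on class representatives using smoothness, propagate class labels bottom-up along $T(\fA)$, and read off membership through a neutral element. Your sidedness worry is well founded and applies equally to the paper's final check $\cH_{0_{\Box},\alpha(\fA)}=1$: for $\fA \Box \fB := \fB$, every structure is left-neutral, yet $\sim_{\Box}^{\cP}$ has a single class whatever $\cP$ is, so membership is not a class invariant. Keep the fix of requiring that the neutral element can serve as the context $\fC$ in the equivalence, i.e.\ $\fA \Box_{i_0} 0_{\Box_{i_0}} = \fA$ (or $\Box_{i_0}$ commutative). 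Your ``column version'' alternative only works if it means re-running the whole argument with the column relation and assuming smoothness for that relation.
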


\begin{remark}
If all the operations  of $\cS  = \IND(\cA, \B, \cF)$ are smooth for $\CMSOL$ and $\cP$ is definable in $\CMSOL$,
Theorem \ref{th:main-a} corresponds to the corresponding Theorem from \cite[Section 4.21]{makowsky2004algorithmic}.
\end{remark}

\section{Applications}
\label{se:applications}
We illustrate how to  use Theorem \ref{th:C-L} in the cases of tree-width, clique-width, and modular width.

\subsection{Tree-width}

We define here $\TW(k)$, the class of graphs of tree-width at most $k$ inductively.
This approach is taken from \cite{makowsky2004algorithmic}.
We do it on colored graphs $\bar{G}= (V(G), E(G), C_1, \ldots, C_k)$ augmented with $k$ unary predicates the interpretation of which are disjoint possibly empty sets.
Here, the colors correspond to ports and fusion of ports gives $k$-connections.
The operation $\mathrm{fuse}_i(G)$ contracts the set of elements $C_i(G)$  of $V(G)$ colored with color $i$ to a single point $c$.
If an element $c' \in C_i(G)$ and $(u, c') \in E(G)$  then $(u,c) \in E(\mathrm{fuse}_i(G))$, and the same for $(c', u)$.

\begin{enumerate}[(i)]
\item 
All graphs $\bar{G}$ with at most $k+1$ vertices are in $\TW(k)$. 
\item 
$TW_k$ is closed under disjoint union.
\item 
$TW_k$ is closed under renaming of colors.
\item 
$TW_k$ is closed under {\em fusion}, i.e. for every 
colored graph $G \in TW_k$,
and for every unary predicate symbol representing a vertex color $C_i$, 
also $\mathrm{fuse}_i(\bar{G}) \in TW_k$.
\end{enumerate}
This definition is not the standard definition given, 
say in \cite{diestel2000graduate},
but is equivalent to it, cf. \cite{courcelle2002fusion}.

\begin{proposition}
Let $\bar{G}$ be a colored graph with $k$ colors. Then $\bar{G} \in \TW(k)$ iff $\bar{G}$ has tree-width at most $k$.
\end{proposition}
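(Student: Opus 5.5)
Both directions should be handled by translating between the parse tree of the inductive definition and a tree decomposition in which each bag has at most $k+1$ vertices. The colors act as port labels: at any stage of the construction, the colored vertices are the ones that may still be identified (fused) with vertices built elsewhere. A subtlety must be fixed at the outset. Fusion of $C_i$ creates a single vertex $c$, and I will take $c$ to keep color $i$. Renaming may merge color classes or make them empty. Under this reading, after any fusion each color class is a singleton or empty, so the "active boundary" has size at most $k$.

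\emph{Soundness} ($\bar{G}\in \TW(k)$ implies tree-width at most $k$). I would prove, by induction on the parse tree, the following stronger invariant. Every $\bar{G}\in\TW(k)$ has a tree decomposition of width at most $k$ such that, for each color $i$, the vertices of $C_i$ all lie in a common bag or along a connected subtree of bags.

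The cases go as follows.
\begin{itemize}
\item Base graphs: a single bag containing all $\le k+1$ vertices.
\item Disjoint union: join the two decompositions by an edge between arbitrary bags; the invariant for each color class is the delicate point, discussed below.
\item Renaming: the decomposition is unchanged, but merged classes need the same care.
\item Fusion of $C_i$: replace every occurrence of a vertex of $C_i$ by $c$. Bag sizes do not increase, and the bags containing $c$ form a connected subtree because of the invariant.
\end{itemize}
The delicate point is keeping the color classes connected through unions and renamings. The cleanest fix is a stronger invariant: every vertex of every color class belongs to one distinguished \emph{root bag}. With this invariant, disjoint union glues the two root bags through a new root containing all colored vertices. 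That root must also have at most $k+1$ vertices. This forces the convention that before a union the color classes are singletons, which is arranged by fusing first; otherwise the bags can grow. I expect this bookkeeping to be the main obstacle, and I would follow \cite{courcelle2002fusion} to pin down the exact invariant.

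\emph{Completeness} (tree-width at most $k$ implies $\bar{G}\in\TW(k)$). Take a nice rooted tree decomposition of width $k$. Build the graph bottom-up, where the structure at node $t$ is the subgraph induced by the vertices appearing in the subtree of $t$. The vertices of the bag $B_t$ are given distinct colors, using at most $k+1$ colors; since the statement allows $k$ colors, an off-by-one adjustment is needed, e.g.\ reserve one color for vertices that are ``forgotten''. The node types are handled as follows.
\begin{itemize}
\item Introduce node: take the disjoint union with a base graph consisting of the new vertex, its bag-neighbours as colored copies, and the edges among them. Then fuse each color so that the copies are identified with the existing bag vertices.
\item Forget node: recolor the forgotten vertex to an uncolored state. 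The coloring scheme must allow this, e.g.\ through a dummy color class that is never fused again.
\item Join node: take the disjoint union of the two children and fuse color by color, which identifies the shared bag vertices.
\end{itemize}
Duplicate edges created by fusion are harmless, since $E$ is a set. At the root, uncolor everything to recover $\bar{G}$. Finally, the original coloring of $\bar{G}$ must be reproduced; I would handle this by a last renaming, assuming the convention on colored versus uncolored vertices permits it.
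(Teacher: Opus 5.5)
\emph{Soundness.} The invariant you settle on, that all colored vertices lie in one root bag, is false for $\TW(k)$ as defined. The disjoint union of $k+2$ one-vertex graphs colored $1$ belongs to $\TW(k)$, and its $k+2$ colored vertices cannot fit in a bag of size $k+1$. You also cannot restrict to parse trees that ``fuse first'', because $\TW(k)$ contains all parse trees, and fusing early would identify vertices that stay distinct in the final graph. The paper's one-line argument does not close this either: a single fusion can raise tree-width (identifying the ends of a path on four vertices gives a triangle), so an invariant is needed.

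The missing idea is to keep only one \emph{representative} of each color at the root. Ask for a decomposition of width at most $k$ with a root $r$ such that, for every nonempty $C_i$, the nodes whose bags meet $C_i$ form a connected subtree containing $r$. The cases then go through:
\begin{itemize}
\item Base graphs: a single bag.
\item Disjoint union: add a new root adjacent to both old roots. Its bag holds one vertex of each nonempty class, taken from the old root bags, so at most $k$ vertices.
\item Renaming: it merges two subtrees that both contain $r$.
\item Fusion: replace $C_i$ by $c$ in every bag. The nodes containing $c$ are exactly the old subtree of $C_i$, and no bag grows.
\end{itemize}

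\emph{Completeness.} Your bag-coloring needs the up to $k+1$ vertices of a bag to carry distinct colors when fused. Reserving a dead color only makes the count worse. No off-by-one adjustment exists, because with $k$ colors, and no operation that colors an uncolored vertex, this direction is false.

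Take $k=1$ and the path $p_1p_2p_3p_4$. Its inner vertices have degree $2$, while a $2$-vertex base graph gives degree at most $1$. So $p_2$ and $p_3$ must each arise by fusing occurrences from different base graphs. Hence both occurrences in the base graph carrying the edge $p_2p_3$ must carry the single color from the start. They then stay in one class, and the first fusion touching them identifies $p_2$ with $p_3$. The smooth decompositions the paper invokes do not help, since this path has one.

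What your construction does prove, essentially verbatim, is a corrected statement:
\begin{itemize}
\item it uses $k+1$ colors, one per bag position, with fused vertices keeping their color;
\item it adds an operation that uncolors a class;
\item it is about the underlying uncolored graph.
\end{itemize}
Your final renaming to reproduce an arbitrary coloring of $\bar G$ must be dropped, since renaming cannot color an uncolored vertex.
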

\begin{proof}
In contrast to the standard definition, we have no ports but colors.
We proceed by induction. 
Colored graphs $\bar{G}$ with at most $k+1$ vertices have tree-width at most $k$.
Disjoint union preserves tree-width. Recoloring does not affect tree-width.
Finally, fusion does not increase the tree-width.
Hence all colored graphs in $ \TW(k)$ have tree-width at most $k$.

Conversely, assume $G$ has tree-width at most $k$ and let a tree-decomposition of $G$.
It may be useful to use smooth $k$ tree-decompositions. A $k$ tree-decomposition is smooth 
if every bag has size $k+1$ and the intersection of two neighboring bags has size $k$.
We turn it into a tree decomposition of
$\bar{G}$ by coloring it inductively.
\end{proof}

\begin{theorem}
\label{th:TW-abstract}
\begin{enumerate}[(i)] 
\item
Let $\bar{G} \in \TW(k)$ with a parse-tree $t(\bar{G})$ of size $m(\bar{G})$ and $P$ be a $\CMSOL$-definable property of colored graphs.
Then checking $\bar{G} \in P$ is in linear time  in the size of the parse-tree $m(\bar{G})$.
\item
Let $\bF$ consist of disjoint union, renaming colors $\rho_{i,j}$ and fusion.
Let $P$ be a graph property such that the infinite $\bF$-circuit matrix $CM_{\bF,P}$ is of finite rank, 
then checking $G \in P$ is in linear time  in the size of the parse-tree $m(G)$.
\end{enumerate}
\end{theorem}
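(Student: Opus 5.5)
Both parts follow from the Main Theorem in its general form (Theorem \ref{th:main-a}), applied to the inductive class $\TW(k)=\IND(\cA,\B,\cF)$. Here $\cA$ is the finite set of colored graphs with at most $k+1$ vertices (up to isomorphism), $\B=\{\sqcup\}$ is disjoint union, and $\cF$ consists of the finitely many recolorings $\rho_{i,j}$ and fusions $\mathrm{fuse}_i$, $i,j\in[k]$. The plan has three steps: check that this class is effective and has a neutral element, establish finiteness of the relevant equivalence, and verify smoothness.

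\textbf{Effectiveness and neutral element.} Disjoint union, recoloring and fusion are all computable in linear time on adjacency-list representations, so the class is effective. The empty colored graph is a neutral element $0_{\sqcup}$ for $\sqcup$. The proposition preceding the theorem shows that $\TW(k)$ is exactly the class of colored graphs of tree-width at most $k$, and the parse-tree $t(\bar G)$ is given as input. Once the hypotheses are verified, the Main Algorithm (Definition \ref{algorithm}) evaluates $\alpha$ bottom-up on $t(\bar G)$ with constant work per node, giving linear time in $m(\bar G)$. The constant depends on $k$ and on $P$ through the look-up table.

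\textbf{Finiteness of the equivalence.} For (ii), finite rank of $CM_{\bF,P}$ gives finitely many $\sim^P_{\sqcup}$-classes by Proposition \ref{obs:1}. For (i), a $\CMSOL$-definable $P$ of quantifier rank $q$ has only finitely many rank-$q$ $\CMSOL$-types. The Feferman--Vaught type theorem for $\sqcup$, as in \cite{courcelle2002fusion,makowsky2004algorithmic}, says the rank-$q$ type of $\bar G\sqcup\bar H$ is determined by the rank-$q$ types of $\bar G$ and $\bar H$. So equality of types is a finite-index refinement $\equiv$ of $\sim^P_{\sqcup}$, and it is smooth for $\sqcup$. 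Recoloring and fusion are quantifier-free (respectively MSO) transductions, so a formula about $F(\bar G)$ translates back to one about $\bar G$ of bounded quantifier rank. Hence $\equiv$ at a suitably larger rank is also smooth for each $F\in\cF$. Proposition \ref{pr:hankel}(iv) then transfers smoothness to $\sim^P_{\sqcup}$, and part (i) follows from the Main Theorem.

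\textbf{Smoothness, the main obstacle.} The hardest point is smoothness in part (ii), where no logic is available. Smoothness of $\sqcup$ itself is routine: by associativity and commutativity, $(\bar A\sqcup\bar B)\sqcup\bar C\in P$ iff $\bar A\sqcup(\bar B\sqcup\bar C)\in P$. So replacing $\bar A$ by an equivalent $\bar A'$, and then $\bar B$ by $\bar B'$, preserves membership for every context $\bar C$.

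For the unary operations, the idea is to commute them with the context. Recoloring $\rho_{i,j}$ applied to $\bar A$ alone is not directly expressible as $\rho_{i,j}(\bar A)\sqcup \bar C=\rho'(\bar A\sqcup\bar C')$ unless the colors involved in the context are kept separate. The same issue arises for fusion, since $\mathrm{fuse}_i(\bar A)\sqcup\bar C$ and $\mathrm{fuse}_i(\bar A\sqcup\bar C)$ differ when $\bar C$ also carries color $i$.

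I would first try to choose the finite-rank hypothesis for the full circuit matrix $CM_{\bF,P}$, whose columns are contexts built from all operations in $\bF$. Then $\bar A\sim\bar A'$ means they agree in every $\bF$-context, and $F(\bar A)$, $F(\bar A')$ agree in every context trivially, because $F$ followed by a context is again a context. This reading of ``$\bF$-circuit matrix'' is exactly what makes the unary operations smooth. Its rank is still finite by the hypothesis, which gives the finite look-up table, and the Main Theorem yields (ii).
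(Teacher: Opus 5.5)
The gap is in part (i), at the step ``Proposition~\ref{pr:hankel}(iv) then transfers smoothness to $\sim^P_{\sqcup}$''. No such transfer is possible. Knowing $\bar A\sim^P_{\sqcup}\bar A'$ does not give $\bar A\equiv\bar A'$, so smoothness of $\equiv$ says nothing about $F(\bar A)$ versus $F(\bar A')$. In fact $\sim^P_{\sqcup}$ is typically not smooth for fusion or renaming, even when $P$ is first-order.

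Take $P$ = ``exactly one vertex''; then $\sim^P_{\sqcup}$ has three classes (no vertex, one vertex, at least two vertices). Let $\bar A$ be two isolated vertices of color $1$, and $\bar A'$ the same plus an isolated vertex not of color $1$. Then $\bar A\sim^P_{\sqcup}\bar A'$. But $\mathrm{fuse}_1(\bar A)$ has one vertex and $\mathrm{fuse}_1(\bar A')$ has two, so taking $\bar C$ to be the empty graph separates them. Similarly, for $P$ = ``some vertex has color $2$'', a single vertex of color $1$ and a single uncolored vertex are $\sim^P_{\sqcup}$-equivalent, but not after applying $\rho_{1,2}$.

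So the smoothness hypothesis of Theorem~\ref{th:main-a} fails for such $P$, and (i) cannot be obtained by applying that theorem to $\sim^P_{\sqcup}$. Proposition~\ref{pr:hankel}(iv) is false as stated. The paper's own reduction of this theorem to the Main Theorem implicitly rests on it, so you inherited this defect rather than introduced it.

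A second, smaller point concerns your claim that $\equiv$ at a larger rank is smooth for the unary operations. A backward translation that raises quantifier rank by a constant only gives $\bar A\equiv_{q+c}\bar A'\Rightarrow F(\bar A)\equiv_q F(\bar A')$. That makes no fixed $\equiv_{q'}$ a congruence once unary operations are iterated along a parse tree. For a general MSO transduction, e.g.\ adding all edges between vertices at distance two, no fixed rank works. What you need is that renaming and fusion map rank-$q$ types to rank-$q$ types exactly. This is immediate for renaming, which is a quantifier-free substitution of color predicates. For fusion it is the content of the fusion theorem of \cite{courcelle2002fusion}, not a consequence of fusion merely being an MSO transduction.

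The repair is the device you already use in (ii). Let $\approx$ denote agreement in all $\bF$-contexts. It is a congruence for every operation by construction, and the trivial context shows that it saturates $P$. If $P$ is defined by a sentence of quantifier rank $q$, then $\equiv_q$ is a congruence for $\sqcup$, $\rho_{i,j}$ and $\mathrm{fuse}_i$ and saturates $P$. Hence $\bar A\equiv_q\bar A'$ implies $c[\bar A]\in P\iff c[\bar A']\in P$ for every context $c$. Thus $\equiv_q$ refines $\approx$, so $\approx$ has finite index and (i) reduces to (ii). Alternatively, run the bottom-up evaluation on $\equiv_q$ directly, as in \cite{makowsky2004algorithmic}.

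Your treatment of (ii) is sound. Your broad reading of $CM_{\bF,P}$, which the paper never defines, is in fact necessary. With only the $\sqcup$-Hankel matrix as hypothesis, $P=\{$connected graphs in $Q\}$ has rank at most $2$ for every class $Q$. That includes classes of paths whose set of lengths is undecidable, so (ii) would be false. Note, though, that Theorem~\ref{th:main-a} is formulated for $\sim^{\cP}_{\B}$, not for $\approx$. So ``the Main Theorem yields (ii)'' really means rerunning the Main Algorithm with $\approx$ in place of $\sim^P_{\sqcup}$, and this should be said explicitly.
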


\subsection{Clique-width}
We define here $\CW(k)$, the class of graphs of clique-width at most $k$ inductively.
We again define it for colored graphs with at most $k$ colors.
 
\begin{enumerate}[(i)]
\item 
All colored graphs $\bar{G}$ with one vertex colored with one of the colors are in $CW_k$.
\item 
$CW_k$ is closed under disjoint union.
\item 
$CW_k$ is closed under renaming of colors.
\item 
$CW_k$ is closed under $\eta_{i,j}(\bar{G})$ which adds all the edges between the two colors $i$ and $j$.
\end{enumerate}

\begin{theorem}
\label{th:CW-abstract}
\begin{enumerate}[(i)] 
\item
Let $\bar{G} \in \CW(k)$ with a parse-tree $t(\bar{G}$ of size $m(\bar{G})$ and $P$ be a $\CMSOL$-definable property of colored graphs.
Then checking $\bar{G} \in P$ is in linear time  in the size of the parse-tree $m(\bar{G})$.
\item
Let $\bF$ consist of disjoint union, renaming colors $\rho_{i,j}$ and 
under $\eta_{i,j}(\bar{G})$ which adds all the edges between the two colors $i$ and $j$.
Let $P$ be a graph property such that the infinite $\bF$-circuit matrix $CM_{\bF,P}$ is of finite rank, 
then checking $G \in P$ is in linear time  in the size of the parse-tree $m(G)$.
\end{enumerate}
\end{theorem}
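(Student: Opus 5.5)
The plan is to derive both parts of Theorem \ref{th:CW-abstract} from the general Main Theorem (Theorem \ref{th:main-a}). Part (i) is then the special case where $P$ is $\CMSOL$-definable. Everything reduces to checking the hypotheses of Theorem \ref{th:main-a} for the class $\CW(k)$.

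\textbf{Step 1: $\CW(k)$ is an effective inductive class.} Present $\CW(k)$ as $\IND(\cA, \B, \cF)$ with:
\begin{itemize}
\item $\cA$ the finite set of one-vertex graphs colored with one of the $k$ colors;
\item $\B = \{\sqcup\}$;
\item $\cF$ the finitely many renamings $\rho_{i,j}$ together with the edge-adding operations $\eta_{i,j}$.
\end{itemize}
Disjoint union and renaming are clearly linear time. The operation $\eta_{i,j}$ is not linear in the size of the graph, since it may add quadratically many edges. To handle this, I would measure cost as the paper does, in the size of the parse tree $m(\bar G)$. The Main Algorithm never builds the graph: it only manipulates equivalence-class representatives via the finite look-up table. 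So each node of $t(\bar G)$ costs $O(1)$ table lookups, and the running time is linear in $m(\bar G)$. I would state this explicitly as the reading of ``effective'' used here.

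\textbf{Step 2: smoothness and a neutral element.} The empty colored graph is a neutral element for $\sqcup$. For smoothness, I would use the finite-rank hypothesis together with Proposition \ref{pr:hankel}(iv). It suffices to exhibit a finer equivalence that is smooth and compatible with $\sqcup$, $\rho_{i,j}$ and $\eta_{i,j}$.
\begin{itemize}
\item In case (i), take $\CMSOL$ quantifier-rank-$q$ equivalence, where $q$ is the quantifier rank of the defining formula. By the Feferman--Vaught type theorems of \cite{makowsky2004algorithmic}, this is preserved by disjoint union and by quantifier-free transductions such as $\rho_{i,j}$ and $\eta_{i,j}$. Moreover, by the Finite Rank Theorem it has finitely many classes.
\item In case (ii), the hypothesis on the $\bF$-circuit matrix $CM_{\bF,P}$ should be read as precisely the assumption that the induced equivalence is a congruence for all operations of $\bF$ and has finitely many classes. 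That is, its rows are indexed by contexts built from $\bF$, so equal rows are automatically preserved under applying any further operation.
\end{itemize}
Observation \ref{obs:3} and Proposition \ref{obs:1} then bound the number of classes by a function of the rank.

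\textbf{Step 3: apply the Main Algorithm.} Build the look-up tables $\hat{\cH}$ for $\sqcup$ and $\hat{\cF}$ for the $\rho_{i,j}$ and $\eta_{i,j}$. Evaluate $\alpha$ bottom-up along $t(\bar G)$. Finally, answer via Proposition \ref{pr:hankel}(ii) using the row of $0_{\sqcup}$.

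\textbf{Main obstacle.} I expect the hard step to be Step 2 in case (ii): smoothness of $\eta_{i,j}$ and $\rho_{i,j}$ with respect to $\sim_{\sqcup}^{P}$. This does not follow from finite rank of the $\sqcup$-Hankel matrix alone, because $\eta_{i,j}$ applied to a disjoint union interacts across both parts. My first attempt would be to define the circuit matrix with columns indexed by arbitrary $\bF$-contexts (terms with one hole). Smoothness would then be tautological, and the remaining work is to check that finiteness of that rank still bounds the number of classes.
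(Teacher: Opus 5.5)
\textbf{Gap in Step 2, case (i).} The paper states this theorem without proof. It is meant as an instance of Theorem~\ref{th:main-a}, whose hypotheses the paper never checks for $\CW(k)$, and $CM_{\bF,P}$ is never defined. So your route is the intended one. Your handling of the non-linearity of $\eta_{i,j}$ is correct and repairs a point the paper glosses over: only table lookups are performed. The gap is in how you get smoothness in case (i).

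You pass from the smooth, finite-index refinement $\equiv_q$ to smoothness of $\sim_{\sqcup}^{P}$ via Proposition~\ref{pr:hankel}(iv), and Step 3 then builds $\hat{\cF}$ on $\sim_{\sqcup}^{P}$-classes. A smooth refinement does not make the coarser relation smooth; isomorphism is a smooth refinement of every isomorphism-invariant equivalence. Here the conclusion fails already for first-order $P$:
\begin{itemize}
\item Take $P$: ``some vertex of color $1$ is adjacent to some vertex of color $2$''.
\item Since $\sqcup$ adds no edges, $A\sqcup C\in P$ iff $A\in P$ or $C\in P$. So $\sim_{\sqcup}^{P}$ has exactly two classes: $P$ and its complement.
\item A single color-$1$ vertex, and the disjoint union of a color-$1$ vertex with a color-$2$ vertex, both lie outside $P$ and are therefore equivalent.
\item Yet $\eta_{1,2}$ leaves the first outside $P$ and moves the second into $P$.
\end{itemize}
So the $\eta_{1,2}$-column of $\hat{\cF}$ is not well defined. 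The obstacle you correctly flag for (ii) sinks (i) as written too.

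\textbf{The fix.} It is the one in your last paragraph, applied to both parts: never use $\sim_{\sqcup}^{P}$ as the state set. Define $A\approx_P B$ iff $\kappa[A]\in P\Leftrightarrow\kappa[B]\in P$ for every one-hole term $\kappa$ built from $\sqcup$, $\rho_{i,j}$, $\eta_{i,j}$ and the one-vertex graphs.
\begin{itemize}
\item It is a congruence for all three operations by construction.
\item Membership in $P$ is read off via the trivial context, so no neutral element is needed. This matters, since the empty graph is not in $\CW(k)$.
\item If the corresponding matrix has $GF(2)$-rank $r$, there are at most $2^r$ classes by Proposition~\ref{obs:1}. So the ``remaining work'' you mention is immediate.
\end{itemize}
For (i), $\equiv_q$ is a congruence for all three operations and refines membership in $P$. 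Hence it refines $\approx_P$, which therefore has finite index. The finiteness of $\equiv_q$ comes from there being finitely many rank-$q$ $\CMSOL$ types, not from the Finite Rank Theorem, which concerns connection matrices. Only with this argument does (i) become a special case of (ii), as you claimed at the outset.

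Your reading of $CM_{\bF,P}$ is forced, not just convenient. Take the property ``the underlying graph is $K_n$ for some $n\in S$'' with $S\subseteq\N$ undecidable. It has finite $\sqcup$-Hankel rank, yet it is undecidable on $\CW(2)$, where $K_n$ has a parse tree of size $O(n)$.
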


\subsection{Modular-width}

We consider various substitution operations for $k$-graphs 
$$
G = (V(G), E(G), v_1, \ldots , v_k).
$$
\begin{definition}
Let $H$ be a graph on $k$ vertices and $G_1, \ldots, G_k$ be graphs $G_i =(V(H_i), E(H_i))$.
The graph $\bar{H} = H[G_1, \ldots, G_k]$ is defined as follows:

\begin{enumerate}[(i)]
\item
$V(\bar{H}) = \bigcup_i^k V(H_i)$.
\item
$E(\bar{H}) = \bigcup_i^k E(H_i) \cup \{(u, v): u \in V(H_i), v \in V(H_j), i \neq j, (v_i, v_j) \in E(G) \}$
\\
Hence $\bar{H} = H[G_1, \ldots, G_k]$ is obtained from $G$
by substituting every vertex $v_i \in V (G)$ with the graph $H_i$ and adding all edges between the
vertices of the graph $H_i$ and the vertices of a graph $H_j$ whenever $(v_i , v_j) \in E(G)$.
\end{enumerate}
We do not require that the vertex sets $V(H_i)$ are disjoint.
\end{definition}

In \cite{gajarsky2013parameterized} an inductive definition of the class of graphs $\MW(k)$ of modular-width at most $k$  is given as follows:
\begin{enumerate}[(i)] 
\item
Graphs consisting of one vertex  are in $\MW(k)$.
\item
If $G_1, G_2 \in \MW(k)$  then $G_1 \sqcup G_2 \in  \MW(k)$, i.e.,  $\MW(k)$
is closed under disjoint unions.
\item
If $G_1, G_2 \in \MW(k)$  then $G_1 \bowtie G_2 \in  \MW(k)$, i.e.,  $\MW(k)$
is closed under the join.
\item
Let $H$ be a graph with exactly $k$ vertices, and let $G_1, \ldots, G_k \in \MW(k)$.
Then $H[G_1, \ldots, G_k] \in \MW(k)$.
\end{enumerate}

\begin{proposition}
\begin{enumerate}[(i)] 
\item
There are only finitely many graphs on $k$ vertices.
Hence there are only finitely many $k$-ary operations
$H[G_1, \ldots, G_k]$.
\item
$H[G_1, \ldots, G_k]$ is $\MSOL$-smooth and also $\CMSOL$-smooth.
\end{enumerate}
\end{proposition}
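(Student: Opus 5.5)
For part (i) the plan is a direct count. The operation $H[G_1,\ldots,G_k]$ is determined by the graph $H$ on the fixed vertex set $\{v_1,\ldots,v_k\}$. Such a graph is a subset of $\{v_1,\ldots,v_k\}^2$, so there are at most $2^{k^2}$ of them, or $2^{\binom{k}{2}}$ for simple undirected graphs. Since the operation depends only on $E(H)$, there are at most that many distinct $k$-ary operations. This also shows that the inductive definition of $\MW(k)$ uses a finite set of operations, which Theorem \ref{th:main-a} needs.

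For part (ii) I would show that $H[G_1,\ldots,G_k]$ is a quantifier-free (scalar-free) transduction of the disjoint union $G_1\sqcup\cdots\sqcup G_k$, augmented with $k$ unary predicates $P_i$ marking $V(G_i)$. I then invoke the Feferman--Vaught type theorem for $\MSOL$ and $\CMSOL$ with respect to disjoint union, as in \cite{courcelle2002fusion,makowsky2004algorithmic}. Here the disjointness of the $V(G_i)$ is assumed, as in the substitution operation; the "not required disjoint" remark in the definition is handled by taking disjoint copies.

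The key steps, in order, are as follows.
\begin{enumerate}[(1)]
\item Define $E(\bar H)(u,v)$ by the quantifier-free formula
$$E(u,v)\vee\bigvee_{(v_i,v_j)\in E(H),\, i\neq j}\bigl(P_i(u)\wedge P_j(v)\bigr).$$
\item Use the standard translation lemma. Every $\CMSOL$ sentence $\phi$ of quantifier rank $q$ about $\bar H$ translates into a sentence $\phi^\sharp$ of the same quantifier rank (and the same modular counting moduli) about the labelled disjoint union. Set quantifiers stay set quantifiers, since the transduction is domain-preserving and has no new elements.
\item Apply the Feferman--Vaught theorem for disjoint union. The rank-$q$ $\CMSOL$ type of the labelled disjoint union is determined by the rank-$q$ types of the $G_i$, and the labels $P_i$ are harmless because each is definable componentwise.
\item Conclude that if $G_i\equiv^{\CMSOL}_q G_i'$ for all $i$, then $H[G_1,\ldots,G_k]\equiv^{\CMSOL}_q H[G'_1,\ldots,G'_k]$, which is exactly $\CMSOL$-smoothness. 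The $\MSOL$ case is the same argument restricted to $\MSOL$, so it needs no separate work.
\end{enumerate}

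The main obstacle is step (3) for $\CMSOL$. The modular counting quantifiers $C_{m}X$ require the Feferman--Vaught reduction to track cardinalities modulo $m$ additively across components. I would handle this by citing the $\CMSOL$ version in \cite{courcelle2002fusion}, which says disjoint union is $\CMSOL$-smooth. If I wanted to avoid that citation, I would argue directly by an Ehrenfeucht--Fra\"iss\'e game with modular counting moves: a set move in the union decomposes into independent moves in each component, and residues mod $m$ add up.
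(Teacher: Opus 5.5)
Your proof is correct and takes essentially the paper's approach. Part (i) is the same finite count the paper builds into the statement. For part (ii), which the paper asserts without proof, you view $H[G_1,\ldots,G_k]$ as a quantifier-free, domain-preserving transduction of the disjoint union of the $P_i$-labelled components and then apply Feferman--Vaught for disjoint union; this is the standard argument of \cite{courcelle2002fusion,makowsky2004algorithmic} that the paper implicitly relies on, and your disjoint-copies reading of the ``not necessarily disjoint'' remark is the right one, since operations are assumed isomorphism-invariant.
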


\begin{theorem}
\label{th:MW-abstract}
\begin{enumerate}[(i)] 
\item
Let $G \in \MW(k)$ with a parse-tree $t(G)$ of size $m(G)$ and $P$ be a $\CMSOL$-definable graph property.
Then checking $G \in P$ is in linear time  in the size of the parse-tree $m(G)$.
\item
Let $\bF$ consist of disjoint union, join, and $H$-substitution for all graphs $H$ on $k$ vertices.
Let $P$ be a graph property such that the infinite $\bF$-circuit matrix $CM_{\bF,P}$ is of finite rank, 
then checking $G \in P$ is in linear time  in the size of the parse-tree $m(G)$.
\end{enumerate}
\end{theorem}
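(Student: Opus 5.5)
The plan is to derive Theorem \ref{th:MW-abstract} from the general Main Theorem (Theorem \ref{th:main-a}), extended in the obvious way to operations of arity $k$. Part (i) then follows by specializing part (ii): for $P$ definable in $\CMSOL$, the connection-type Hankel matrices have finite rank by the Finite Rank Theorem (in the form of \cite{godlin2008evaluations}, adapted to these operations), and $\CMSOL$-smoothness of the operations holds by the Proposition just above. So the real work is part (ii).

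\textbf{Step 1: $\MW(k)$ as an effective inductive class.} Take $\cA$ to be the one-vertex graph and $\B$ to consist of $\sqcup$, $\bowtie$ and the finitely many $H$-substitutions $H[\cdot,\ldots,\cdot]$ for graphs $H$ on $k$ vertices; finiteness of this set is part (i) of the preceding Proposition. Each operation is computable in time linear in its output, given adjacency lists. The neutral element for $\sqcup$ is the empty graph $0_{\sqcup}$, which supplies the neutral element that the general definition of smoothness requires.

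\textbf{Step 2: the equivalence relation for a $k$-ary operation.} For a $k$-ary operation I define equivalence in one argument position, with all other positions filled by arbitrary graphs. Equivalently, I identify it with the binary relation obtained by fixing the other arguments. I then intersect all these relations with $\sim_{\sqcup}^P$ and $\sim_{\bowtie}^P$ to get $\sim_{\bF}^P$. The hypothesis that $CM_{\bF,P}$ has finite rank $m$ gives, via Proposition \ref{obs:1}, at most $2^m$ rows in each matrix. By Observation \ref{obs:3}, $\sim_{\bF}^P$ therefore has finitely many classes, bounded in terms of $m$ and the (finite) number of operations.

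\textbf{Step 3: smoothness.} I must show that replacing an argument $G_i$ by an equivalent $G_i'$ keeps the result equivalent. The hypothesis ``smooth for $\bF$ and $P$'' in Theorem \ref{th:main-a} is where this enters. I expect this to be the main obstacle, since finite rank alone does not force smoothness. The first thing to try is to exploit associativity of substitution: a context $K[\,\cdot\,]$ applied to $H[G_1,\ldots,G_k]$ is again a composed substitution with $G_i$ in one hole. Hence equivalence of $G_i$ and $G_i'$ under all one-hole contexts built from $\bF$ transfers. This suggests defining $\sim_{\bF}^P$ via all unary polynomial contexts, which makes it a congruence automatically, and bounding its index by the finite rank.

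\textbf{Step 4: the algorithm.} With a congruence of finite index in hand, build the finite look-up tables for each operation, now $k$-dimensional for substitutions, and run the Main Algorithm (Definition \ref{algorithm}) bottom-up on $t(G)$. Each node costs $O(1)$ table lookups, so the total running time is linear in $m(G)$. Finally, decide $G\in P$ by reading the entry of $\cH(\cdot,\sqcup,P)$ at row $0_{\sqcup}$ for the class $\alpha(G)$, using Proposition \ref{pr:hankel}(ii).
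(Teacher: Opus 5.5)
Your Steps 1, 2 and 4 follow the paper's route: treat $\MW(k)$ as an effective inductive class with finitely many operations and apply Theorem \ref{th:main-a}. The gap is in Step 3. Defining the equivalence through all one-hole $\bF$-contexts does give a congruence, since a context composed with $H[\ldots,\cdot,\ldots]$ is again a context. But the index of that congruence is \emph{not} bounded by the ranks of your Step 2 matrices. A composed context is a substitution into a graph on up to $2k-1$ vertices, or an alternation of different operations, so it indexes no column of any matrix you assumed to have finite rank.

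This is not a technicality. Take $k=2$, so $\MW(2)$ is the class of cographs and the $H$-substitutions are just $\sqcup$ and $\bowtie$. For $S\subseteq\{2,3,\ldots\}$ let $P_S$ be the class of graphs isomorphic to $(K_n\sqcup K_n)\bowtie K_1$ with $n\in S$.
\begin{itemize}
\item The co-components of a join are those of its factors, so $X\bowtie C\in P_S$ iff $\{X,C\}=\{K_n\sqcup K_n,K_1\}$ for some $n\in S$.
\item $X\sqcup C$ is disconnected, so it is never in $P_S$.
\item Hence the $\sqcup$-matrix is zero and the $\bowtie$-matrix has rank at most $2$, for every $S$.
\item All $K_n$ with $n\ge 2$ have zero rows in both matrices, so they are $\sim_{\bF}^{P_S}$-equivalent.
\item Yet for $n\in S$ the context $(\,\cdot\,\sqcup K_n)\bowtie K_1$ accepts $K_n$ and rejects every $K_{n'}$ with $n'\neq n$.
\end{itemize}
So smoothness fails, and the context congruence has infinite index whenever $S$ is infinite. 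When $S$ is undecidable, no algorithm decides $P_S$ from parse trees at all.

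With the hypothesis read as in your Step 2, part (ii) therefore cannot be derived, and an extra assumption is unavoidable. The paper obtains (ii) through Theorem \ref{th:main-a}, whose hypotheses explicitly include smoothness of $\sim_{\B}^{\cP}$. You have two options:
\begin{itemize}
\item assume smoothness, as the paper does; or
\item read $CM_{\bF,P}$ as the matrix whose columns are indexed by all one-hole $\bF$-contexts (circuits).
\end{itemize}
Under the second reading your Step 3 works verbatim. Finite rank gives finitely many distinct rows by Proposition \ref{obs:1}, equality of rows is a congruence in every argument position, and Step 4 then runs in linear time.

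For part (i) your plan is sound once the finiteness comes from the right place. The relation of having the same $\CMSOL$-type of quantifier rank $q$ has finite index and is preserved by every operation in $\bF$, by the preceding Proposition. It therefore refines the context congruence, which gives the bound you need. The Finite Rank Theorem for one-step matrices would not suffice, by the example above.
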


\section{Beyond inductive classes}
\label{se:beyond}

Our main theorem is proved using two ingredients: The fact that our notion of width has an inductive definition and that the operations used
or the inductive definition are smooth. In this section we briefly discuss two cases of possible applications of the main theorem,
where it is not clear whether these ingredients are available.

\subsection{Twin-width}
\label{se:twinwidth}
Recently a new notion of graph width, {\em twin-width} has been defined \cite{bonnet2021twin,bonnet2022twin} and found very useful.
It generalizes most of the currently used notions of width in graph theory.
It is, however, not clear how to give a recursive definition of graphs of twin-width at most $k$ which fits our present framework.
Currently no such definition is known and it seems unlikely that such a definition exists, \cite{Bonnet-private}.
In particular a logic based version of Courcelle's Theorem 
for twin-width
has only been shown 
for graph properties definable in first order logic $\FOL$, \cite{bonnet2021twin,gajarsky2022twin}.  
For graph properties definable in
monadic second order logic $\MSOL$ it has been shown for graph classes with {\em bounded contraction sequences}, a notion slightly weaker than
bounded twin-width, \cite{bonnet2021twin,bonnet2022twin}
For a detailed discussion the reader may consult \'E. Bonnet's habilitation thesis \cite{bonnet2024twin}.

\subsection{Hypergraphs and matroids}
There are various analogues of Courcelle's Theorem for certain classes of matroids.
In \cite[Section 3]{mayhew2026monadic}, the state of art as of 2026 is summarized.
Recent developments may be found in \cite{funk2022tree,funk2023tree}.

There are several ways of defining $\CMSOL$ for matroids depending whether they are defined as 
hypergraphs ($\CMSOL^{hyp}$), 
by independent set ($\CMSOL^{hyp}$), or any of the other ways considered in the Matroid literature, \cite{oxley2006matroid}.
Petr Hlin\v{e}n\'y, \cite{hlinveny2006branch} was the first to prove a Courcelle-type theorem for matroids:

Let $\mathbb{F}$ be a finite field.
\begin{theorem}[Hlin\v{e}n\'y 2005]
Let $\phi$ a sentence of $\CMSOL^{hyp}$.
There as an $\FPT$ algorithm for testing whether a $\mathbb{F}$-representable matroid of fixed branch-width satisfies $\phi$.
\end{theorem}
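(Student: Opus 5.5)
The plan is to reduce the statement to Theorem \ref{th:main-a}. This means exhibiting $\mathbb{F}$-represented matroids of branch-width at most $k$ as an effective inductive class $\IND(\cA, \B, \cF)$ that is smooth for every $\CMSOL^{hyp}$-definable property $\cP$, with finitely many $\sim_{\B}^{\cP}$-classes. The objects in the inductive class will not be bare matroids but \emph{boundaried representations}. Such an object is a finite set $E$ of vectors in $\mathbb{F}^n$ together with a distinguished \emph{boundary} subspace $B \le \mathrm{span}(E)$ of dimension at most $k$, given by a fixed basis. The base structures $\cA$ are representations with one element and a boundary of dimension at most $k$. The binary operations in $\B$ glue two boundaried representations along parts of their boundaries, in the style of Hlin\v{e}n\'y's parse trees. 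The unary operations in $\cF$ forget or restrict the boundary, and play the role of the recolouring and fusion operations in the tree-width case of Section \ref{se:applications}. Since $|\mathbb{F}|$ and $k$ are fixed, the boundary configurations and gluing patterns form a finite set, so $\B$ and $\cF$ are finite. Each operation only manipulates a bounded amount of boundary data, so it is computable in linear time, and the class is effective.

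The first step is the decomposition. Given the matrix representing the matroid, we compute a branch decomposition of width $O(k)$ in FPT time by the Hlin\v{e}n\'y--Oum approximation algorithm, which we assume. The key fact to use is that for an edge of the decomposition separating $E$ into $X$ and $E \setminus X$, the subspace $\mathrm{span}(X) \cap \mathrm{span}(E \setminus X)$ has dimension equal to the connectivity $\lambda(X) - 1 \le k$. Choosing this intersection as the boundary at each node turns the branch decomposition, bottom-up, into a parse tree $T(\fA)$ of linear size. This yields the hypothesis ``given with its parse tree'' in Theorem \ref{th:main-a}.

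The second step, which I expect to be the main obstacle, is smoothness together with finiteness of the number of classes. For graphs this is a Feferman--Vaught argument. For matroids the independence predicate of the glued structure is not determined componentwise: whether $X_1 \cup X_2$ is independent depends on how $\mathrm{span}(X_1)$ and $\mathrm{span}(X_2)$ meet the common boundary. I would first try to repair this by enriching the vocabulary. For each subspace $W$ of the boundary (finitely many, since $|\mathbb{F}|$ and $k$ are fixed), add a predicate $I_W(X)$ expressing that $X$ is independent and $\mathrm{span}(X) \cap B = W$. Then $X_1 \cup X_2$ is independent in the glued matroid iff there are $W_1, W_2$ with $I_{W_1}(X_1)$, $I_{W_2}(X_2)$, and $W_1 \cap W_2 = 0$ computed in the glued boundary, which is a fixed finite condition. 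With this enrichment, every $\CMSOL^{hyp}$ formula of quantifier rank $q$ about the glued structure translates into a finite Boolean combination of rank-$q$ formulas about the two parts. Hence equivalence of rank-$q$ types, in the sense of the Remark following Proposition \ref{pr:hankel}, is preserved by the operations and has finitely many classes. By Proposition \ref{pr:hankel}(iv) it refines $\sim_{\B}^{\cP}$, so $\sim_{\B}^{\cP}$ is smooth with finitely many classes, and its rank is finite by Proposition \ref{obs:1}.

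Finally, a neutral element for gluing is the empty representation with zero boundary. Theorem \ref{th:main-a} then supplies the finite look-up table and the Main Algorithm of Definition \ref{algorithm}, which evaluates $\alpha$ bottom-up along $T(\fA)$. Combined with the FPT decomposition step, this gives the claimed FPT algorithm, with the parameter depending on $k$, $|\mathbb{F}|$ and $\phi$.
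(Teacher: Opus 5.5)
The paper gives no proof of this statement. It quotes it from Hlin\v{e}n\'y, and in Section~\ref{se:beyond} it explicitly leaves open whether represented matroids fit the framework of Theorem~\ref{th:main-a}, remarking only that Hlin\v{e}n\'y's proof goes through tree automata.

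Your construction is essentially Hlin\v{e}n\'y's parse trees: boundaries $\mathrm{span}(X)\cap\mathrm{span}(E\setminus X)$ of dimension $\lambda(X)-1$, finitely many composition operators, and the invariant $W=\mathrm{span}(X)\cap B$. What differs is how arbitrary sentences are handled. In Hlin\v{e}n\'y's proof the matroid-specific input is that a finite automaton tracking $W$ at each node recognizes independence of a marked leaf set; closure of tree automata under Boolean operations and projection then does the rest. You instead add the set predicates $I_W$ and invoke a Feferman--Vaught composition theorem. That is a legitimate alternative and matches the paper's view that logic only certifies finiteness, while the automaton route gives an explicit construction.

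Your key computation is correct. In the amalgam, $\mathrm{span}(X_1)\cap\mathrm{span}(X_2)$ lies inside the identified subspace, so independence of $X_1\cup X_2$ depends only on $W_1,W_2$. However, for rank-$q$ types to be preserved by the operations, you also need the enriched predicates of the glued object, taken with respect to its \emph{new} boundary $B_3$, to be determined by those of the parts. This is true, since $\mathrm{span}(X_1\cup X_2)\cap(B_1+B_2)=W_1+W_2$ and $B_3\le B_1+B_2$. The latter holds for boundaries coming from a branch decomposition and should be built into the definition of the operations. This check is missing from your argument.

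The step that fails as written is the transfer of smoothness from types to $\sim_{\B}^{\cP}$ via Proposition~\ref{pr:hankel}(iv), which is false in general. Take words under concatenation and $\cL=\{ab\}$. The two-sided syntactic congruence is smooth and refines $\sim_{\circ}^{\cL}$, yet $b\sim_{\circ}^{\cL}bb$ while $ab\not\sim_{\circ}^{\cL}abb$. The relation $\sim_{\B}^{\cP}$ only sees one-step contexts $\fA\,\Box_i\,\fC$. It is a congruence only if deeper contexts can be regrouped into one-step ones, which you would have to prove for your gluing operations. So the smoothness hypothesis of Theorem~\ref{th:main-a} is not established by your argument.

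The simplest repair is to bypass $\sim_{\B}^{\cP}$. The algorithm of Definition~\ref{algorithm} only needs some finite-index congruence whose classes decide $\cP$, and rank-$q$ type equivalence (with the check above) is one. This also makes the look-up table computable from $\phi$, $k$ and $\mathbb{F}$, whereas Theorem~\ref{th:main-a} only asserts that some finite table exists.
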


The proof of Hlin\v{e}n\'y's Theorem and many of its generalizations use that $\CMSOL$ properties of matroids can be recognized  by tree-automata,
and a version of the corresponding Myhill-Nerode Theorem. Therefore, it should be possible, at least for some cases of matroid classes,
to formulate and prove a version of our main theorem for matroids.

\section{Conclusions}
\label{se:conclu}

B. Courcelle's famous theorem 
 shows that many graph problems which are $\NP$-hard in general,
are in $\FPT$ when restricted to graph classes of bounded tree-width.
This theorem uses logic, more precisely definability in $\MSOL$, as its main hypothesis.
It is therefore a meta-theorem in the sense of \cite{makowsky2025matrix}.
L. Lov\'asz showed in \cite[Theorem 6.48]{lovasz2012large} that the hypothesis of definability in $\MSOL$ can be replaced by a purely combinatorial
hypothesis using connection matrices of finite rank.
A precursor of such a separation may be found in the pioneering paper from 1984 by C. Blatter and E. Specker on integer sequences \cite{blatter1984recurrence}.
There the hypothesis of definability in $\MSOL$ can be replaced by substitution matrices 
of finite rank. Substitution matrices are a special case of Hankel matrices for the $\MSOL$-smooth operation $G[H]$ of substitution of a graph $H$
into a pointed graph $G$. Equivalently, $H$ is a module in $G[H]$.
A modern treatment can be found in \cite{filmus2023mc}. 

A precursor of such a separation may be found in \cite{blatter1984recurrence} where definability in $\MSOL$ can be replaced by substitution matrices 
of finite rank. Substitution matrices are a special case of Hankel matrices.

An abstract version of Courcelle's Theorem which still involved logic can be found already in
\cite{makowsky2004algorithmic}.
Inspired by this we showed 
in this paper how to replace logic in Courcelle's Theorem 
by a purely combinatorial hypothesis.

There are several advantages in replacing definability in logic by a combinatorial condition.
\begin{itemize}
\item
Conceptual: It clarifies the role of logic and of combinatorics. Logic is only used as an easy way to show that the rank of the Hankel matrices for $\cP$ are finite.
\item
Algorithmic: The circuit rank of  $\cP$ is very often very much smaller than what one obtains from definability assumption.
\item
Range of applicability: There are continuum many properties with fixed Hankel rank, whereas there are only countable many $\MSOL$-definable properties $\cP$.
\end{itemize}

However, we do not know how to compute (or even give an upper bound for)  the Hankel rank or the circuit rank for $\cP$ in general.
It remains a challenging project to investigate what one has to know about $\cP$ in order to estimate or compute the Hankel or circuit rank.
Simple examples are given in Examples \ref{ex:lowrank} and Theorem \ref{th:lowrank}.

Courcelle's Theorem was also formulated for numeric graph parameters and graph polynomials. Our approach also extends to this settings.
In this case it is helpful to replace graphs ($\tau$-structures) by formal linear combinations of finitely many graphs or structures.
Instead of the field $GF(2)$ one uses now the underlying field or ring of the graph parameter and use as rank the rank over the field or ring.

There are many ways in which to pursue our line of research and results. Let us conclude with three major challenges:

\begin{description}
\item[Twin-width:] Can our results extended to graph classes bounded twin-width.
\item[Hypergraphs and matroids:] Formulate a logic-free version of Courcelle-like theorems for hypergraphs and matroids.
\item[Relational structures:] Find interesting versions of logic-free version of Courcelle-like theorems for other classes of relational $\tau$-structures.
\end{description}

\bibliography{St}
\bibliographystyle{alpha}
\end{document}